\newtheorem{definition}{Definition}
\newtheorem{theorem}{Theorem}
\newtheorem{lemma}{Lemma}
\newcommand\pgfmathsinandcos[3]{%
  \pgfmathsetmacro#1{sin(#3)}%
  \pgfmathsetmacro#2{cos(#3)}%
}
\newcommand\LongitudePlane[3][current plane]{%
  \pgfmathsinandcos\sinEl\cosEl{#2} 
  \pgfmathsinandcos\sint\cost{#3} 
  \tikzset{#1/.style={cm={\cost,\sint*\sinEl,0,\cosEl,(0,0)}}}
}
\newcommand\LatitudePlane[3][current plane]{%
  \pgfmathsinandcos\sinEl\cosEl{#2} 
  \pgfmathsinandcos\sint\cost{#3} 
  \pgfmathsetmacro\yshift{\cosEl*\sint}
  \tikzset{#1/.style={cm={\cost,0,0,\cost*\sinEl,(0,\yshift)}}} %
}
\newcommand\DrawLongitudeCircle[2][1]{
  \LongitudePlane{\angEl}{#2}
  \tikzset{current plane/.prefix style={scale=#1}}
  \pgfmathsetmacro\angVis{atan(sin(#2)*cos(\angEl)/sin(\angEl))} %
  \draw[current plane] (\angVis:1) arc (\angVis:\angVis+180:1);
  \draw[current plane,dashed] (\angVis-180:1) arc (\angVis-180:\angVis:1);
}
\newcommand\DrawLatitudeCircle[2][1]{
  \LatitudePlane{\angEl}{#2}
  \tikzset{current plane/.prefix style={scale=#1}}
  \pgfmathsetmacro\sinVis{sin(#2)/cos(#2)*sin(\angEl)/cos(\angEl)}
  \pgfmathsetmacro\angVis{asin(min(1,max(\sinVis,-1)))}
  \draw[current plane] (\angVis:1) arc (\angVis:-\angVis-180:1);
  \draw[current plane,dashed] (180-\angVis:1) arc (180-\angVis:\angVis:1);
}
\tikzset{%
  >=latex, 
  inner sep=0pt,%
  outer sep=2pt,%
  mark coordinate/.style={inner sep=0pt,outer sep=0pt,minimum size=3pt,
    fill=black,circle}%
}
\title{\LARGE \bf
{On Almost-Global Tracking for a Certain Class of Simple Mechanical Systems
}}
\author{A. Nayak$^{1}$  and R. N. Banavar$^{2}$
\thanks{$^{1}$ A. Nayak and $^{2}$R. N. Banavar are with the Systems and Control Enginerring,
Indian Institute of Technology Bombay, Mumbai, Maharashtra 400076, India
        {\tt\small aradhana@sc.iitb.ac.in, banavar@iitb.ac.in}}%
}
\begin{document}

\maketitle
\thispagestyle{empty}
\pagestyle{empty}

\begin{abstract}

In this article, we propose a control law for almost-global asymptotic tracking (AGAT) of a smooth reference trajectory for a fully actuated simple mechanical system (SMS) evolving
on a Riemannian manifold which can be embedded in a Euclidean space. The existing results on tracking for an SMS are either local, or almost-global, only in the case the manifold is a Lie group. In the latter case, the notion of a configuration error is naturally defined by the group operation and facilitates a global analysis. However, such a notion is not intrinsic to a Riemannian manifold. In this paper, we define a configuration error followed by
error dynamics on a Riemannian manifold, and then prove AGAT. The results are demonstrated for a spherical pendulum which is an SMS on $S^2$ and for a particle moving on a Lissajous curve in $\mathbb{R}^3$.

 \end{abstract}

 \section{Introduction}
The problem of stabilization of an equilibrium point of an SMS on a Riemannian manifold has been well studied in the  literature  in a geometric framework \cite{bulo}, \cite{3dpendulum}, \cite{cl1}, \cite{cl2}, \cite{ejc}. Further extensions of these results to the problem of locally tracking a smooth and bounded trajectory can be found in \cite{bulo}. An SMS is completely specified by a manifold, the kinetic energy of the mechanical system, which defines the Riemannian metric on the manifold, the potential forces, and the external forces or one forms on the manifold. If the Riemannian manifold is embedded in a Euclidean space, the metric on the manifold is induced from the Euclidean metric. In \cite{bulo}, a proportional and derivative plus feed forward (PD+FF) feedback control law is proposed for tracking a trajectory on a Riemannian manifold using error functions. This controller achieves asymptotic tracking only when the initial configuration of the SMS is in a neighborhood of the initial reference configuration. Therefore, such a tracking law achieves local convergence. As pointed out in \cite{kodi} and \cite{dayawansa}, global stabilization and global tracking is guaranteed only when the configuration manifold is diffeomorphic to $\mathbb{R}^n$. This leads us to the question: Is almost-global asymptotic stabilization (AGAS) of an equilibrium point and, almost-global asymptotic tracking (AGAT) of a suitable class of reference trajectories possible on a Riemannian manifold?
\newline
AGAS problems on a compact Riemannian manifold trace their origin to an early work by Koditschek. In \cite{kodi}, a potential function called as ``navigation function" is introduced, which is a Morse function on the manifold with a unique minimum. It is shown that there exists a dense set from which the trajectories of a negative gradient vector field generated by the navigation function converge to the minimum. A class of simple mechanical systems can be generated by the ``lifting" of the gradient vector field to the tangent bundle of the manifold. It is shown that the integral curves of such an SMS on the tangent bundle behave similar to integral curves of the gradient vector field of the navigation function on the manifold. In particular, the integral curve of such an SMS originating from a dense set in the tangent bundle converges asymptotically to the minimum of the navigation function in the zero section of the tangent bundle.
\newline
Koditschek's idea can be extended to almost-global tracking of a smooth and bounded trajectory on a Lie group. In a tracking problem, it is essential to define the notion of both configuration and velocity errors between the reference and the system trajectory. For an SMS on a lie group, a configuration error is defined by the left or right group operation and the velocity error is defined on the Lie algebra. This defines the error dynamics on the tangent bundle of the Lie group. It is shown in \cite{dayawansa} that the error dynamics is an SMS on the tangent bundle of the Lie group and, is generated by the tangent lift of a navigation function. Therefore, Koditschek's theorem can be applied to achieve AGAS of the error dynamics. Specific problems of almost-global tracking and stabilization in Lie groups have been studied in the literature as well. In \cite{asanyal}, \cite{tleemleok}, control laws for AGAT are proposed on $SE(3)$ using Morse functions.
\newline
Our contribution extends the existing results on AGAT to compact manifolds embedded in an Euclidean space. It is shown in \cite{Morse} that a navigation function exists on any compact manifold. We choose a configuration error map on the manifold subject to certain requirements imposed by the navigation function. The velocity error between the reference and system trajectory is defined along the error trajectory on the manifold with the help of two ``transport maps'' defined by the configuration error map. This construction gives rise to error dynamics on the tangent space of the error trajectory. To the best of our knowledge, this approach of AGAT for an SMS has not been explored before. In \cite{bulo}, a transport map is introduced to compare velocities at two configurations. However, in such a construction the error dynamics is not the ``lift" of a gradient vector field of a navigation function. Therefore, Koditschek's theorem is not applicable to the error dynamics.  In this paper, the error dynamics we introduce is an SMS on a compact Riemannian manifold and hence Koditschek's theorem can be applied for AGAS of the error dynamics. This leads to AGAT of the reference trajectory.
\newline
The paper is organised as follows. The second section is a brief introduction to relevant terminology in associated literature. In the third section we elaborate on the ``lift" of a gradient vector field and state the main result on AGAS from \cite{kodi}. The following section is on AGAT for a fully actuated SMS on a compact manifold. We first define the allowable configuration error map and navigation function for the tracking problem. Subsequently in the main theorem, we state our proposed control for AGAT on a Riemannian manifold. In the next section we demonstrate the idea of two transport maps for AGAT on a Lie group by choosing a configuration error defined by the group operation. The last section shows simulation results for a spherical pendulum which is an SMS on $S^2$ and for a particle moving on a Lissajous curve in $\mathbb{R}^3$.

\section{Preliminaries}
A Riemannian manifold is denoted by the 2-tuple $(M, \mathbb{G})$, where $M$ is a smooth connected manifold and $\mathbb{G}$ is a smooth, symmetric, positive definite $(0,2)$ tensor field (or a metric) on $M$. $\stackrel{\mathbb{G}}{\nabla}$ denotes the Riemannian connection on $(M, \mathbb{G})$ (see \cite{petersen},\cite{yano} for more details).  Let $\Psi: M \to \mathbb{R}$ be a twice differentiable function on $(M, \mathbb{G})$.
The Hessian of $\Psi$ is the symmetric $(0,2)$ tensor field denoted by $Hess \Psi$ and defined as $ Hess \Psi(q)(v_q, w_q) =  \langle \langle v_q, \stackrel{\mathbb{G}}{\nabla}_{w_q} grad \Psi \rangle \rangle,$ where $v_q$, $w_q \in T_q M$ and $\langle \langle v_q, w_q\rangle \rangle \coloneq \mathbb{G}(q)(v_q, w_q)$. Let $x_0$ be a critical point of $\Psi$ and $\{ x^1, \dotsc , x^n\}$ are local coordinates at $x_0$. The Hessian at $x_0$ is given in coordinates as ${(Hess \Psi(x_0))}_{ij} = \frac{\partial ^2 \Psi}{\partial x^i \partial x^j}(x_0)$ (see chapter 13 in \cite{milnor} for details). The map $\mathbb{G}^{\flat}(q) :  T_q M \to T_q ^* M$ is defined by $\langle \mathbb{G}(q)^{\flat} (v_1), v_2 \rangle \coloneq \mathbb{G}(q)(v_1, v_2)$ for $v_1$,$v_2 \in T_q M$. Therefore, if $\{ e^i\}$ is a basis for $T^*_q M$ in a coordinate system then $\mathbb{G}^{\flat}$ is expressed in coordinates as $ \mathbb{G}^{\flat}(q) (v_q) = \mathbb{G}_{ij} {v_q}^j e^i$, where $\mathbb{G}_{ij}$ is the matrix representation of $\mathbb{G}(q)$ in the chosen basis. The map $ \mathbb{G}^\sharp: T_q ^* M \to T_q M$ is dual of $\mathbb{G}^\flat$. It is expressed in coordinates as $ \mathbb{G}^\sharp (w) = \mathbb{G}^{ij} w_j e_i$ for $w \in T_q^*M$.
\subsection{SMS on a Riemannian manifold}
A fully actuated simple mechanical system (or an SMS) on a smooth, connected Riemannian manifold $(M, \mathbb{G})$ is denoted by the 3-tuple $(M,\mathbb{G}, F)$, where $F$ is an external force. The governing equations are
\begin{equation}\label{eq:3}
\stackrel{\mathbb{G}}{\nabla}_{\dot{\gamma}(t)} \dot{\gamma}(t) =  \mathbb{G}^{\sharp} (F(\dot{\gamma}(t)))
\end{equation}
where $\gamma(t)$ is the system trajectory.
\subsection{SMS on a Riemannian manifold embedded in $\mathbb{R}^m$}
Consider a Riemannian manifold $(M,\mathbb{G})$. By Nash embedding theorem (see \cite{nash}), there exists an isometric embedding $f : M \to \mathbb{R}^m$ for some $m$ depending on the dimension of $M$. The Euclidean metric $\mathbb{G}_{id}$ on $\mathbb{R}^m$ is the Riemannian metric such that in Cartesian coordinates
\begin{equation}\label{gid}
G_{id} = \delta_{ij} = \begin{cases}
1 & \text{if} \quad i =j\\
0 & \text{if} \quad i \neq j.
\end{cases}
\end{equation}
Therefore, the metric $\mathbb{G}$ on $M$ is induced by the Riemannian metric as follows
\begin{equation} \label{metricind}
\mathbb{G}= f^{*}\mathbb{G}_{id}
\end{equation}
where $f^{*}\mathbb{G}_{id}$ is the pull back of $\mathbb{G}_{id}$  (see Definition 3.81 in \cite{bulo}). The equations of motion for the SMS $(M, \mathbb{G},F)$ in \eqref{eq:3} can be simplified by embedding $M$ in $\mathbb{R}^m$. The idea behind this approach is that we consider the SMS to evolve on $(\mathbb{R}^m, \mathbb{G}_{id})$ subject to a distribution $\mathcal{D}$ (the velocity constraint) whose integral manifold is $M$ (see section 4.5 in \cite{bulo} for more details). The subspace $\mathcal{D}_{x} = T_{f^{-1}(x)} M$. Let $P_{\mathcal{D}_y}$ and $P_{\mathcal{D}_y}^{\perp}$ be projection bundle maps from
 $T_y \mathbb{R}^m$ to $\mathcal{D}_y$ and $\mathcal{D}^\perp_y$ respectively so that for $v_y \in T_y \mathbb{R}^m$, $P_{\mathcal{D}_y}(v_y) \in \mathcal{D}_y$ and $P_{\mathcal{D}_y}^\perp(v_y) \in \mathcal{D}^\perp_y$ respectively.
 \begin{definition}\label{constcon}(\cite{bulo})
 The constrained affine connection on $M$ is denoted by $\stackrel{\mathcal{D}}{\nabla}$ and is defined for $X$, $Y \in \Gamma^{\infty} (TM)$ as
\begin{equation}\label{eq:111}
  \stackrel{\mathcal{D}}{\nabla}_X Y = \stackrel{\mathbb{G}_{id}}{\nabla}_X Y + (\stackrel{\mathbb{G}_{id}}{\nabla}_X P_{\mathcal{D}}^\perp) Y
\end{equation}
 \end{definition}
The equations of motion for the SMS $(\mathbb{R}, \mathbb{G}_{id}, \mathcal{D})$ are (see Theorem 4.87 in \cite{bulo})
\begin{align}\label{123}
  \stackrel{\mathcal{D}}{\nabla} _{\dot{\boldsymbol{\gamma}}} \dot{\boldsymbol{\gamma}} &=0,\\
P_{\mathcal{D}_{\boldsymbol{\gamma}}}^\perp (\dot{\boldsymbol{\gamma}}(t_0))&=0 \quad \text{for some} \quad t_0 \in \mathbb{R}^{+},\nonumber
\end{align}
where $\boldsymbol{\gamma}:\mathbb{R}^{+} \to \mathbb{R}^m$ denotes the system trajectory, $\dot{\boldsymbol{\gamma}}(t) \in \mathcal{D}_{\boldsymbol{\gamma}(t)}$ for all $t \in \mathbb{R}^{+}$ and $\ddot{\boldsymbol{\gamma}}(t) \in {\mathcal{D}}^\perp_{\gamma(t)}$.   \eqref{123} can be simplified by substituting for $\stackrel{\mathcal{D}}{\nabla}$ from \eqref{eq:111} as follows
\begin{equation}\label{112}
 \ddot{\boldsymbol{\gamma}} +(\stackrel{\mathbb{G}_{id}}{\nabla}_{\dot{\boldsymbol{\gamma}}} P_{\mathcal{D}}^\perp)\dot{\boldsymbol{\gamma}}=0.
\end{equation}
Now consider the SMS $(\mathbb{R}, \mathbb{G}_{id},F, \mathcal{D})$. The equations are
\begin{align}\label{113}
\stackrel{\mathcal{D}}{\nabla} _{\dot{\boldsymbol{\gamma}}}\dot{\boldsymbol{\gamma}}&=  u,\\
P_{\mathcal{D}_{\boldsymbol{\gamma}}}^\perp (\dot{\boldsymbol{\gamma}}(t_0))&=0 \quad \text{for some} \quad t_0 \in \mathbb{R}^{+} \nonumber
\end{align}
where $\boldsymbol{\gamma}:\mathbb{R} \to \mathbb{R}^m$ denotes the system trajectory and, $u = P_{\mathcal{D}_\gamma} \{\mathbb{G}_{id}^\sharp (F)\}$. Further, substituting from \eqref{eq:111}, \eqref{113} can be written as
\begin{equation}\label{134}
 \ddot{\boldsymbol{\gamma}} +(\stackrel{\mathbb{G}_{id}}{\nabla}_{\dot{\boldsymbol{\gamma}}} P_{\mathcal{D}}^\perp)\dot{\boldsymbol{\gamma}}= u.
\end{equation}
\textit{Remark 1:} For a fully actuated SMS on $M$, the number of independent controls available is the dimension of $M$. However, as \eqref{113} and \eqref{134} are expressed in Euclidean coordinates, the control vector field $u$ is in $\mathbb{R}^m$.\\
\textit{Remark 2:} Equation \eqref{eq:3} and \eqref{113} both represent the dynamics of the SMS $(M, \mathbb{G},F)$. The system trajectory $\boldsymbol{\gamma}(t) \in \mathbb{R}^m$ (in \eqref{eq:3}) is the push forward of $\gamma(t) \in M$ (in \eqref{eq:111}) by the embedding $f: M \to \mathbb{R}^m$. Therefore, $\boldsymbol{\gamma}(t)= f_* \gamma(t)$ (see Remark 4.98 in \cite{bulo}).

\section{AGAS of error dynamics}
The main objective is almost global asympototic tracking (AGAT) of a given reference trajectory on the Riemannian manifold $(M, \mathbb{G})$ which is embedded in $\mathbb{R}^m$ (described in section II B.). The tracking problem is reduced to a stabilization problem by introducing the notion of a configuration error on the manifold. Almost global asymptotic stabilization (AGAS) of the error dynamics about a desired "zero error" configuration leads to AGAT of the reference trajectory. In this section, we (a) introduce this configuration error map between two configurations on a Riemmanian manifold to express the error between the reference and system trajectories and, (b) explicitly obtain the error dynamics for the tracking problem.
\\
\begin{definition}\label{navfn}(\cite{kodi})
A function $\psi: M \to \mathbb{R}$ on $(M, \mathbb{G})$ is a navigation function iff
\begin{enumerate}
  \item $\psi$ attains a unique minimum.
  \item $Det(Hess \psi(q)) \neq 0$ whenever $\mathrm{d}\psi(q) = 0$ for $q \in M$.
\end{enumerate}
\end{definition}
Let $\gamma:\mathbb{R}^+ \to M$ denote the system trajectory of $(M, \mathbb{G})$ and $\gamma_{ref}:\mathbb{R}^+ \to M$ denote a reference trajectory on the $M$. We define a $\mathcal{C}^2$ map $E:M \times M \to M$ between any two configurations on the manifold called the \textit{configuration error map}. The error trajectory on $M$ is $E(\gamma(t), \gamma_{ref}(t))$. The following condition characterizes the class of error maps for the AGAT on $M$.
\begin{definition}
  Consider a navigation function $\psi$ on $M$ (as in Definition ~\ref{navfn}). A configuration error map $E$ is compatible with $\psi$ iff
  \begin{itemize}
    \item $\psi\circ E(\gamma(t),\gamma_{ref}(t))= \psi \circ E(\gamma_{ref}(t),\gamma(t))$ for all $t \in \mathbb{R}^+$ and,
    \item $E(q,q) = q_0$, where $q_0$ is the minimum of $\psi$.
  \end{itemize}
\end{definition}
The following equation describes the controlled error dynamics for the tracking problem on $(M, \mathbb{G})$
\begin{align}\label{errdyn}
  &\stackrel{\mathcal{D}}{\nabla}_{\dot{\textbf{E}}(\boldsymbol{\gamma}(t), \boldsymbol{\gamma}_{ref}(t))}{\dot{\textbf{E}}(\boldsymbol{\gamma}(t), \boldsymbol{\gamma}_{ref}(t))} =\\
&P_{\mathcal{D}_{\textbf{E}(t)}}\{\mathbb{G}_{id}^\sharp (- K_p \mathrm{d}\boldsymbol{\psi}(\textbf{E})+F_{diss}(\dot{\textbf{E}} ))\} \nonumber
\end{align}
where $\boldsymbol{\gamma} = f_* \gamma$, $\boldsymbol{\gamma}_{ref} = f_* \gamma_{ref}$, $\boldsymbol{\psi} \coloneq f_* \psi$, $\textbf{E} \coloneq f_* E$, $E$ is a compatible with the navigation function $\psi$, $K_p$ is a positive definite matrix and, $F_{diss}: T\mathbb{R}^m \to T^* \mathbb{R}^m$ is a dissipative force, which means $\langle F_{diss}(v), v \rangle \leq 0$ for all $v \in \mathbb{R}^m$.
From \eqref{113}, we conclude that the error dynamics (in \eqref{errdyn}) is an SMS. In the following Lemma, we apply the main results from \cite{kodi} to conclude AGAS of the error dynamics about the minimum of $\psi$ lifted to the zero section of $TM$.\\
\textit{Note:} From henceforth we shall denote the  push forward of entities by $f$ in bold font.
\begin{lemma}\label{lem1}
 Consider the SMS $(M, \mathbb{G})$ whose dynamics is given by \eqref{113}, and a smooth reference trajectory $\gamma_{ref}: \mathbb{R}^+ \to M$. The error dynamics in \eqref{errdyn} is AGAS about $(q_m,0)$ where $q_m$ is the unique minimum of the navigation function $\psi$.
\end{lemma}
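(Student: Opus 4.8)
The plan is to recognise the error dynamics \eqref{errdyn} as the tangent lift to $TM$ of the negative gradient of a navigation function on the compact Riemannian manifold $(M,\mathbb{G})$, corrupted only by a dissipative force, and then to invoke the main result on AGAS from \cite{kodi}. The crucial point is that, although \eqref{errdyn} is written in ambient $\mathbb{R}^{m}$ coordinates through the constrained connection $\stackrel{\mathcal{D}}{\nabla}$, both of its sides depend only on $\textbf{E}$ and $\dot{\textbf{E}}$; hence it is an \emph{autonomous} simple mechanical system, the time dependence of $\gamma_{ref}$ having been absorbed into the construction of $\textbf{E}$.

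First I would check that the closed-loop flow remains on $TM$. Since $E:M\times M\to M$ is $M$-valued, the error curve $\textbf{E}(t)\coloneq\textbf{E}(\boldsymbol{\gamma}(t),\boldsymbol{\gamma}_{ref}(t))=f_{*}E(\gamma(t),\gamma_{ref}(t))$ lies in $f(M)$, so $\dot{\textbf{E}}(t)\in\mathcal{D}_{\textbf{E}(t)}$ and $P^{\perp}_{\mathcal{D}_{\textbf{E}}}(\dot{\textbf{E}}(t))=0$ for all $t$, and the velocity constraint in \eqref{113} is satisfied automatically. By Remark~2, \eqref{errdyn} is then the embedded form of the SMS $(M,\mathbb{G},F)$ of \eqref{eq:3} with external force $F=-K_{p}\,\mathrm{d}\psi+F_{diss}$; intrinsically it reads
\begin{equation*}
\stackrel{\mathbb{G}}{\nabla}_{\dot{E}}\dot{E}=\mathbb{G}^{\sharp}\bigl(-K_{p}\,\mathrm{d}\psi(E)+F_{diss}(\dot{E})\bigr)
\end{equation*}
on $M$, the forces being understood as restricted to $TM$.

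Next I would identify the two forcing terms. The term $-K_{p}\,\mathrm{d}\psi$ is, up to the positive-definite reshaping by $K_{p}$, the negative gradient of $\psi$; in particular it vanishes exactly at the critical points of $\psi$ and its linearisation there has the same inertia as $Hess\,\psi$, so (since $\psi$ is Morse with a unique minimum $q_{m}$, as in the definition of a navigation function and in the Preliminaries) the equilibria of the lifted system are a single asymptotically stable point $(q_{m},0)$ together with saddle/maximum-type equilibria $\{(q,0):\mathrm{d}\psi(q)=0,\ q\neq q_{m}\}$. The second term descends to a dissipative force on $(M,\mathbb{G})$: $P_{\mathcal{D}}$ is $\mathbb{G}_{id}$-orthogonal and $\dot{\textbf{E}}\in\mathcal{D}$, so $\langle\langle\,P_{\mathcal{D}_{\textbf{E}}}\mathbb{G}_{id}^{\sharp}F_{diss}(\dot{\textbf{E}}),\dot{\textbf{E}}\,\rangle\rangle=\langle F_{diss}(\dot{\textbf{E}}),\dot{\textbf{E}}\rangle\le0$. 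Hence \eqref{errdyn} is of the dissipative-lifted-gradient type covered by \cite{kodi}, and that theorem supplies a dense, open, full-measure subset of $TM$ from which every integral curve converges asymptotically to $(q_{m},0)$, the stable manifolds of the remaining equilibria being nowhere dense because $Hess\,\psi$ is nondegenerate at their base points.

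The step I expect to be the main obstacle is this last identification. One has to verify carefully that passing from the ambient, $K_{p}$-weighted, $\mathcal{D}$-projected description \eqref{errdyn} to an intrinsic SMS on $M$ preserves \emph{exactly} the hypotheses of the cited AGAS result: that the conservative part is equivalent to $-\operatorname{grad}$ of a Morse function with a single minimum — which for a non-scalar $K_{p}$ requires $K_{p}$ to be compatible with the splitting $T\mathbb{R}^{m}=\mathcal{D}\oplus\mathcal{D}^{\perp}$, or else a correspondingly adapted energy function — and that the residual force is dissipative with kernel confined to the zero section, so that a LaSalle argument on the total energy $\tfrac{1}{2}\langle\langle\dot{\textbf{E}},\dot{\textbf{E}}\rangle\rangle+\psi(\textbf{E})$ yields asymptotic rather than merely Lyapunov stability; compactness of $M$ is then used to guarantee completeness of the closed-loop flow. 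Once these points are settled, the conclusion is immediate from \cite{kodi}.
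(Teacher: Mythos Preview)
Your proposal is correct and follows essentially the same route as the paper: rewrite \eqref{errdyn} intrinsically on $TM$ as $\stackrel{\mathbb{G}}{\nabla}_{\dot E}\dot E=\mathbb{G}^{\sharp}(-K_p\,\mathrm{d}\psi(E)+F_{diss}(\dot E))$, recognise this as a dissipative lift of the gradient of a navigation function, and invoke Koditschek's AGAS result. The paper's proof merely makes explicit what you defer to the citation---the Lyapunov function $K_p\psi(E)+\tfrac12\|v_e\|^2$, the LaSalle step, and the linearisation at critical points---and your caveats about non-scalar $K_p$ and the kernel of $F_{diss}$ are technical points the paper glosses over.
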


\begin{proof}
 We first rewrite \eqref{errdyn} so that the flow $\dot{E}(t)$ evolves on $TM$. From the equivalence in representation of Riemmannian connection $ \stackrel{\mathbb{G}}{\nabla}$ and the constrained connection $ \stackrel{\mathcal{D}}{\nabla}$ noted in Remark 2, \eqref{errdyn} can be expressed as
\begin{align}\label{eq:75}
 \dot{E} &= v_e \\ \nonumber
 \stackrel{\mathbb{G}}{\nabla}_{\dot{E}}{v_e} &= - \mathbb{G}^\sharp K_p \mathrm{d}\psi(E) + \mathbb{G}^\sharp F_{diss}(v_e) \nonumber
 \end{align}
We define an energy like function $E_{cl}$ on $TM$ as $E_{cl}(E,v_E) \coloneq K_p \psi(E) + \frac{1}{2} {||v_e ||}^2$. Then, $E_{cl}(q_m,0)=0$ and $E_{cl}(q,0)>0$ for all $(q,0)$ in a neighborhood of $(q_m,0)$. Also,
\begin{align*}
\frac{\mathrm{d}}{\mathrm{d}t} E_{cl} &= \langle K_p \mathrm{d} \psi(E),v_e  \rangle + \ll v_e, \stackrel{\mathbb{G}}{\nabla}_{\dot{E}} v_e \gg \\
&=\langle K_p \mathrm{d} \psi(E),v_e  \rangle + \mathbb{G}(v_{e},  -\mathbb{G}^{\sharp} (K_p \mathrm{d} \psi(E) - F_{diss}(v_{e}) )\\
&= K_p \langle \mathrm{d} \psi, v_{e} \rangle- K_p \langle \mathrm{d}\psi, v_{e} \rangle + \langle F_{diss}(v_{e}), v_{e} \rangle \leq 0
\end{align*}
as $F_{diss}$ is dissipative. Therefore $E_{cl}$ is a Lyapunov function and the error dynamics in \eqref{errdyn} is locally stable around $(q_m,0)$. In what follows, we obtain all parts of from proposition 3.6 in \cite{kodi} for the error dynamics.
\newline
In proposition 3.2 in \cite{kodi}, as $M$ is a manifold without boundary, $b_1 = +\infty$ and $b_0 = \psi(q_m)$. Therefore, $\epsilon^b = TM$ and by Propostion 3.2, $TM$ is a positively invariant set.
\newline
Observe that $(q*,0)$ is an equilibrium state of \eqref{errdyn} iff $q*$ is a critical point of $\psi$. By proposition 3.3 in \cite{kodi}, the positive limit set of all solution trajectories of \eqref{errdyn} originating in the positive invariant set $TM$ is $\{(p,0)\in TM: \mathrm{d}\psi(p)=0\}$.
\newline
In order to study the behavior of the flow of the vector field in \eqref{eq:75}, we linearize the equations around $(q*,0)$ to obtain
\begin{align}\label{linerr}
&\begin{pmatrix}
\dot{E}\\
\dot{v}_e
\end{pmatrix}=\\ \nonumber
&\begin{pmatrix}
0 & I_n\\
-G^{ij}(q*) {Hess \psi}(q*) & G^{ij}(q*) \circ \frac{\partial F_{diss}}{\partial v_e}(q*,0)
\end{pmatrix}\begin{pmatrix}
E\\ v_e\\
\end{pmatrix}.
\end{align}
By Lemma 3.5 in \cite{kodi} the origin of the LIT system \eqref{linerr} in $\mathbb{R}^{2n}$, $n$ being the dimension of $M$, is (a) asymptotically stable, (b) stable but not attractive or (c) unstable if the origin of the following LTI system in $\mathbb{R}^n$ has the corresponding property
\[\dot{E}= -Hess \psi(E)
\]
The local behavior of the error dynamics around $(q*,0)$ is therefore, determined by the nature of Hessian at $q^*$.
\newline
The trajectories of $-\mathrm{d}\psi(E(t))$ converge to $q_m$ from all but the stable manifolds of the maxima and saddle points. As $\psi$ is a navigation function, the stable manifolds of the maxima and saddle points constitutes a nowhere dense set. Therefore there is a dense set in $TM$ for which all trajectories of \eqref{errdyn} converge to $(q_m,0)$.
\end{proof}
\section{AGAT for an SMS}
In this section we propose a control law for AGAT of a reference trajectory for a fully actuated SMS $(M, \mathbb{G})$ for which the equations of motion are given in \eqref{113}. We first obtain a simplified expression for the constrained covariant derivative of two vector fields on $M$.
\begin{lemma} (Proposition 4.85 in \cite{bulo})\label{lem2}
  Let $X \in \Gamma^{\infty}(TM)$ and $Y \in \Gamma^{\infty}(\mathcal{D})$ be vector fields on $M$. The constrained covariant derivative of $Y$ along $X$ is given as
  \begin{equation}\label{116}
   \stackrel{\mathcal{D}}{\nabla}_X Y = P_{\mathcal{D}}(\stackrel{\mathbb{G}_{id}}{\nabla}_X Y ).
  \end{equation}
\end{lemma}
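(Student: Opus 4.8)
The plan is to derive \eqref{116} directly from the definition of the constrained connection in Definition~\ref{constcon}, exploiting the hypothesis $Y \in \Gamma^{\infty}(\mathcal{D})$. First I would record the two elementary facts that will be used: along $M$ the projection bundle maps are complementary, $P_{\mathcal{D}} + P_{\mathcal{D}}^{\perp} = \mathrm{id}_{T\mathbb{R}^m}$, and since $Y$ takes values in $\mathcal{D}$ we have $P_{\mathcal{D}}^{\perp} Y = 0$ (equivalently $P_{\mathcal{D}} Y = Y$) at every point of $M$.

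Next I would apply the Leibniz rule for the Levi-Civita connection $\stackrel{\mathbb{G}_{id}}{\nabla}$ of $\mathbb{R}^m$, viewing $P_{\mathcal{D}}^{\perp}$ as a $(1,1)$-tensor field and contracting with the vector field $Y$:
\[
\stackrel{\mathbb{G}_{id}}{\nabla}_X\!\big(P_{\mathcal{D}}^{\perp} Y\big) = \big(\stackrel{\mathbb{G}_{id}}{\nabla}_X P_{\mathcal{D}}^{\perp}\big)Y + P_{\mathcal{D}}^{\perp}\big(\stackrel{\mathbb{G}_{id}}{\nabla}_X Y\big).
\]
Because $P_{\mathcal{D}}^{\perp} Y$ vanishes identically along $M$ and $X$ is tangent to $M$, the left-hand side is zero, so $\big(\stackrel{\mathbb{G}_{id}}{\nabla}_X P_{\mathcal{D}}^{\perp}\big)Y = -\,P_{\mathcal{D}}^{\perp}\big(\stackrel{\mathbb{G}_{id}}{\nabla}_X Y\big)$. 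Substituting this into Definition~\ref{constcon} gives
\[
\stackrel{\mathcal{D}}{\nabla}_X Y = \stackrel{\mathbb{G}_{id}}{\nabla}_X Y - P_{\mathcal{D}}^{\perp}\big(\stackrel{\mathbb{G}_{id}}{\nabla}_X Y\big) = \big(\mathrm{id} - P_{\mathcal{D}}^{\perp}\big)\big(\stackrel{\mathbb{G}_{id}}{\nabla}_X Y\big) = P_{\mathcal{D}}\big(\stackrel{\mathbb{G}_{id}}{\nabla}_X Y\big),
\]
which is \eqref{116}.

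The step that needs genuine care --- the main obstacle --- is the differentiation of the product $P_{\mathcal{D}}^{\perp} Y$: a priori both $P_{\mathcal{D}}^{\perp}$ and $Y$ are defined only on $M$, not on an open subset of $\mathbb{R}^m$, so one must either extend them to a tubular neighbourhood of $M$ and check that $\stackrel{\mathbb{G}_{id}}{\nabla}_X(\cdot)$ along directions $X$ tangent to $M$ is independent of the chosen extension, or verify the Leibniz identity componentwise in adapted coordinates. One should also note that $\stackrel{\mathbb{G}_{id}}{\nabla}_X P_{\mathcal{D}}^{\perp}$ is $C^{\infty}(M)$-linear (tensorial) in $X$, so that both sides of \eqref{116} are well-defined bundle maps; once this bookkeeping is in place the remainder is the purely algebraic manipulation above.
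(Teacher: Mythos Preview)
Your argument is correct: the Leibniz rule for the $(1,1)$-tensor $P_{\mathcal{D}}^{\perp}$ applied to a section $Y$ of $\mathcal{D}$, together with $P_{\mathcal{D}}^{\perp}Y=0$, immediately yields $(\stackrel{\mathbb{G}_{id}}{\nabla}_X P_{\mathcal{D}}^{\perp})Y = -P_{\mathcal{D}}^{\perp}(\stackrel{\mathbb{G}_{id}}{\nabla}_X Y)$, and substituting into Definition~\ref{constcon} gives \eqref{116}. The caveat you flag about extensions is handled in the standard way (the covariant derivative of a tensor field along a direction tangent to $M$ depends only on the restriction to $M$), so there is no genuine gap.

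As for comparison: the paper does not supply its own proof of this lemma. It is stated with attribution to Proposition~4.85 of \cite{bulo} and used as a black box in the proof of Theorem~\ref{thm1}. Your derivation is exactly the short computation one finds in that reference, so there is nothing further to contrast.
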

\begin{theorem}(AGAT)\label{thm1}
Consider the SMS $(M,\mathbb{G})$ given by \eqref{113} and a smooth trajectory $\gamma_{ref}:\mathbb{R} \to M$ with bounded velocity. Let $\psi: M \to \mathbb{R}$ be a navigation function and $E:M \times M \to M$ be a compatible error map on the manifold. Then there exists an open dense set $S \in TM$ such that AGAT of $\gamma_{ref}$ is achieved for all $(\gamma(0),\dot{\gamma}(0))\in S$ with $u$ in \eqref{113} given by the solution to the following equations

  \begin{align}\label{thm11}
   \mathrm{d}_1 \mathbf{E}(\boldsymbol{\gamma}, \boldsymbol{\gamma}_{ref})(u) &=  P_{\mathcal{D}_{\mathbf{E}}}(- K_p \mathrm{d}\boldsymbol{\psi}(\mathbf{E}) + F_{diss}(\dot{\mathbf{E}})\\ \nonumber
    &- \mathrm{d}_1 (\mathrm{d}_1 \mathbf{E} )(\dot{\boldsymbol{\gamma}},\dot{\boldsymbol{\gamma}}) \\ \nonumber
    &+ \mathrm{d}_2 (\mathrm{d}_1 \mathbf{E}) (\dot{\boldsymbol{\gamma}}_{ref},\dot{\boldsymbol{\gamma}})+ \frac{\mathrm{d}}{\mathrm{d}t}(\mathrm{d}_2 \mathbf{E} \dot{\boldsymbol{\gamma}}_{ref})) ,\\ \nonumber
\text{and}, \quad  P_{\mathcal{D}_{\boldsymbol{\gamma}}}^\perp (u)&=0.\nonumber
 \end{align}
where $F_{diss}: T\mathbb{R}^m \to T^*\mathbb{R}^m $ is a dissipative force and $K_p \in \mathbb{R}^+$.
\end{theorem}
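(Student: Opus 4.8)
The plan is to engineer $u$ so that the error curve $\mathbf{E}(\boldsymbol{\gamma}(t),\boldsymbol{\gamma}_{ref}(t))$ is forced to satisfy the error dynamics \eqref{errdyn}; then AGAT is immediate from Lemma~\ref{lem1}. So the crux is to show that, along closed-loop trajectories, demanding \eqref{errdyn} is \emph{equivalent} to demanding that $u$ solve \eqref{thm11}, and afterwards to pull the open dense set furnished by Lemma~\ref{lem1} back to $TM$.

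First I would differentiate the curve $t\mapsto\mathbf{E}(\boldsymbol{\gamma}(t),\boldsymbol{\gamma}_{ref}(t))$ in $\mathbb{R}^m$. The chain rule gives $\dot{\mathbf{E}}=\mathrm{d}_1\mathbf{E}(\dot{\boldsymbol{\gamma}})+\mathrm{d}_2\mathbf{E}(\dot{\boldsymbol{\gamma}}_{ref})$, so $\dot{\mathbf{E}}$ is tangent to $f(M)$, i.e. $\dot{\mathbf{E}}\in\Gamma^{\infty}(\mathcal{D})$, with $\mathrm{d}_1\mathbf{E}$ and $\mathrm{d}_2\mathbf{E}$ acting as the two transport maps. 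Differentiating once more and separating the terms quadratic in $(\dot{\boldsymbol{\gamma}},\dot{\boldsymbol{\gamma}}_{ref})$, the term carrying $\ddot{\boldsymbol{\gamma}}$, and the purely reference-dependent term $\tfrac{\mathrm{d}}{\mathrm{d}t}(\mathrm{d}_2\mathbf{E}\,\dot{\boldsymbol{\gamma}}_{ref})$, yields an expression for $\ddot{\mathbf{E}}$. Now I would invoke Lemma~\ref{lem2}: since $\dot{\mathbf{E}}\in\Gamma^{\infty}(\mathcal{D})$, $\stackrel{\mathcal{D}}{\nabla}_{\dot{\mathbf{E}}}\dot{\mathbf{E}}=P_{\mathcal{D}_{\mathbf{E}}}(\stackrel{\mathbb{G}_{id}}{\nabla}_{\dot{\mathbf{E}}}\dot{\mathbf{E}})=P_{\mathcal{D}_{\mathbf{E}}}(\ddot{\mathbf{E}})$, because the covariant derivative along a curve for the flat connection $\stackrel{\mathbb{G}_{id}}{\nabla}$ is the componentwise second derivative; similarly the plant \eqref{113} together with Lemma~\ref{lem2} gives $P_{\mathcal{D}_{\boldsymbol{\gamma}}}(\ddot{\boldsymbol{\gamma}})=u$ with $u\in\mathcal{D}_{\boldsymbol{\gamma}}$. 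Projecting the expression for $\ddot{\mathbf{E}}$ onto $\mathcal{D}_{\mathbf{E}}$ turns $\mathrm{d}_1\mathbf{E}(\ddot{\boldsymbol{\gamma}})$ into $\mathrm{d}_1\mathbf{E}(u)$ — the component of $\ddot{\boldsymbol{\gamma}}$ normal to $M$, fixed by \eqref{134}, does not survive the tangential projection — and equating the result to the right-hand side of \eqref{errdyn} and solving for $\mathrm{d}_1\mathbf{E}(u)$ produces exactly \eqref{thm11}; the auxiliary equation $P^{\perp}_{\mathcal{D}_{\boldsymbol{\gamma}}}(u)=0$ merely records that $u$ is an admissible control, consistent with \eqref{113}. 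Bounded velocity of $\gamma_{ref}$ is used to keep the reference-dependent forcing in \eqref{thm11} bounded, so that closed-loop solutions extend to all $t\in\mathbb{R}^{+}$.

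Conversely, if $u$ solves \eqref{thm11}, reversing these steps shows $\mathbf{E}(\boldsymbol{\gamma}(t),\boldsymbol{\gamma}_{ref}(t))$ satisfies \eqref{errdyn}, which by Lemma~\ref{lem1} is AGAS about $(q_m,0)$, $q_m$ the unique minimum of $\psi$: there is an open dense set of admissible error initial data in $TM$ from which $E(\gamma(t),\gamma_{ref}(t))\to q_m$ and $\dot{E}\to 0$. For fixed $(\gamma_{ref}(0),\dot{\gamma}_{ref}(0))$, the map $(\gamma(0),\dot{\gamma}(0))\mapsto(E(\gamma(0),\gamma_{ref}(0)),\dot{E}(0))$ is a diffeomorphism of $TM$ onto itself — here one uses that $\mathrm{d}_1 E(q_1,q_2):T_{q_1}M\to T_{E(q_1,q_2)}M$ is a linear isomorphism, which is also precisely what makes \eqref{thm11} uniquely solvable for $u$ — so the preimage $S$ of the open dense set is open and dense in $TM$. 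Finally, compatibility of $E$ with $\psi$, namely $E(q,q)=q_0=q_m$ together with $\psi\circ E(\gamma,\gamma_{ref})=\psi\circ E(\gamma_{ref},\gamma)$ and the Morse property of $\psi$, forces $E(\gamma(t),\gamma_{ref}(t))\to q_m$ to imply $\gamma(t)\to\gamma_{ref}(t)$ on compact $M$; this is AGAT on $S$.

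I expect the main obstacle to be the second-derivative bookkeeping: carefully differentiating $\mathrm{d}_1\mathbf{E}(\dot{\boldsymbol{\gamma}})$ and $\mathrm{d}_2\mathbf{E}(\dot{\boldsymbol{\gamma}}_{ref})$ while keeping track of which iterated differential is which and pinning down all signs, and in particular justifying rigorously that the component of $\ddot{\boldsymbol{\gamma}}$ normal to $M$ may be discarded after the projection onto $\mathcal{D}_{\mathbf{E}}$ — equivalently, that the resulting identity is independent of how $\mathbf{E}$ is extended off $f(M)\times f(M)$ for the purpose of computing these derivatives. A secondary point that must be secured is the invertibility of $\mathrm{d}_1 E$, needed both for the solvability of \eqref{thm11} and for transporting the open dense set through the change of coordinates; this is a nondegeneracy property that the class of compatible error maps has to be understood to possess.
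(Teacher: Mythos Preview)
Your proposal is correct and follows essentially the same route as the paper: differentiate $\mathbf{E}$ twice via the chain rule to obtain the two transport maps and the $(2,1)$ tensors, invoke Lemma~\ref{lem2} to pass to the constrained connection, substitute the plant equation \eqref{113} for $\stackrel{\mathcal{D}}{\nabla}_{\dot{\boldsymbol{\gamma}}}\dot{\boldsymbol{\gamma}}=u$, equate with the desired error dynamics \eqref{errdyn}, and appeal to Lemma~\ref{lem1}. The two technical points you flag at the end---discarding the normal part of $\ddot{\boldsymbol{\gamma}}$ under $P_{\mathcal{D}_{\mathbf{E}}}\circ\mathrm{d}_1\mathbf{E}$, and the invertibility of $\mathrm{d}_1E$ needed both to solve \eqref{thm11} and to pull back the dense set---are handled only implicitly in the paper (the first by the remark that ``$\mathrm{d}_1\mathbf{E}$ is a transport map from $T_{\boldsymbol{\gamma}}M$ to $T_{\mathbf{E}}M$'', the second not at all), so your caution there is well placed rather than a divergence from the paper's argument.
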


\begin{proof}
  Let $\textbf{E}(\boldsymbol{\gamma}(t), \boldsymbol{\gamma}_{ref}(t))$ be error trajectory and the closed loop error dynamics be given by \eqref{errdyn}. The velocity vector of the error trajectory is given by
\begin{equation}\label{eq:73}
  \dot{\textbf{E}}(\boldsymbol{\gamma}(t), \boldsymbol{\gamma}_{ref}(t)) = \mathrm{d}_1\textbf{E}(\boldsymbol{\gamma}, \boldsymbol{\gamma}_{ref} ) \dot{\boldsymbol{\gamma}} + \mathrm{d}_2\textbf{E} (\boldsymbol{\gamma}, \boldsymbol{\gamma}_{ref}) \dot{\boldsymbol{\gamma}}_{ref}
\end{equation}
where $\mathrm{d}_1\textbf{E} (\boldsymbol{\gamma}, \boldsymbol{\gamma}_{ref}): T_{\boldsymbol{\gamma}} M \to T_{\textbf{E}(\boldsymbol{\gamma}, \boldsymbol{\gamma}_{ref})}M$ is the partial derivative of $\textbf{E}$ with respect to the first argument and, $\mathrm{d}_2\textbf{E} (\boldsymbol{\gamma}, \boldsymbol{\gamma}_{ref}): T_{{\boldsymbol{\gamma}}_{ref}} M \to T_{\textbf{E}(\boldsymbol{\gamma}, \boldsymbol{\gamma}_{ref})}M$ is the partial derivative of $\textbf{E}$ with respect to the second argument and, $\dot{\textbf{E}}(\boldsymbol{\gamma}, \boldsymbol{\gamma}_{ref}) \in T_{\textbf{E}(\boldsymbol{\gamma}, \boldsymbol{\gamma}_{ref})}M $.\\
 $\mathrm{d}_1\textbf{E}(\boldsymbol{\gamma}, \boldsymbol{\gamma}_{ref}) $ and  $\mathrm{d}_2\textbf{E}(\boldsymbol{\gamma}, \boldsymbol{\gamma}_{ref})$ are similar to ``transport maps" in \cite{bulo} as they transport vectors along the system and reference trajectory respectively to vectors along the error trajectory.
 \begin{figure}[h]
  \centering
  \includegraphics[scale=0.5]{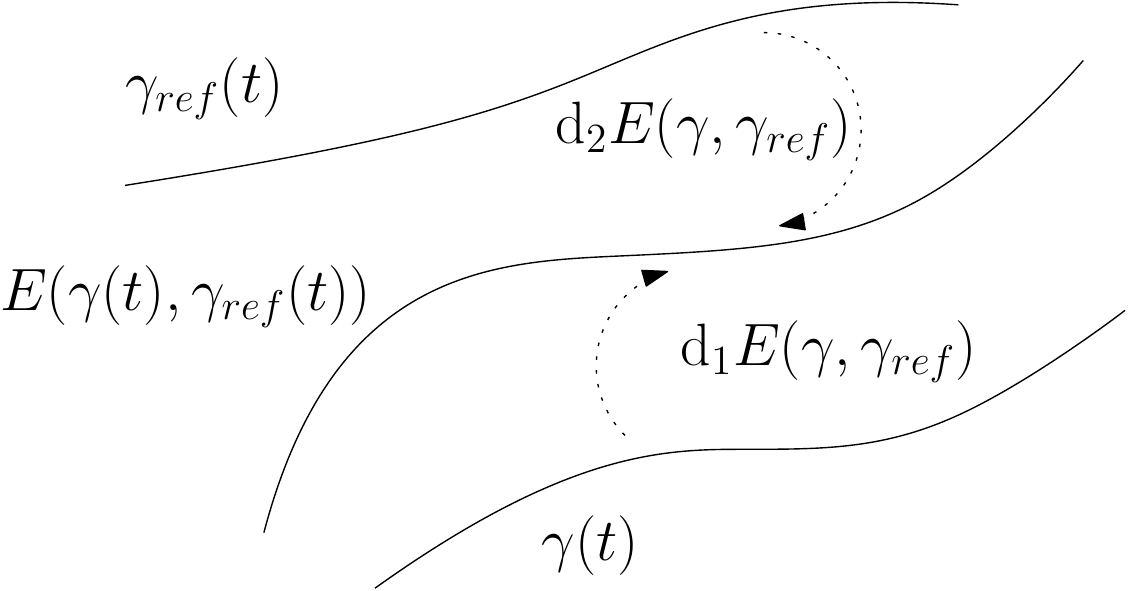}
  \caption{Two transport maps from controlled trajectory $\boldsymbol{\gamma}(t)$ and reference trajectory $\boldsymbol{\gamma}_{ref}(t)$ to the error trajectory $\textbf{E}(\boldsymbol{\gamma},\boldsymbol{\gamma}_{ref})$}\label{2transportmaps}
\end{figure}
\begin{align}\label{118}
\stackrel{\mathcal{D}}{\nabla}_{\dot{\textbf{E}}}{\dot{\textbf{E}}}&= \stackrel{\mathcal{D}}{\nabla}_{\dot{\textbf{E}}}(\mathrm{d}_1\textbf{E} (\boldsymbol{\gamma}, \boldsymbol{\gamma}_{ref})\dot{\boldsymbol{\gamma}} + \mathrm{d}_2\textbf{E}(\boldsymbol{\gamma}, \boldsymbol{\gamma}_{ref}) \dot{\boldsymbol{\gamma}}_{ref})
\end{align}
As $\mathrm{d}_1\textbf{E} (\boldsymbol{\gamma}, \boldsymbol{\gamma}_{ref})\dot{\boldsymbol{\gamma}}$ and $\mathrm{d}_1\textbf{E} (\boldsymbol{\gamma}, \boldsymbol{\gamma}_{ref})\dot{\boldsymbol{\gamma}}_{ref}$ are vector fields along $E
(t)$ on $Q$, therefore, by Lemma~\ref{lem2},
\begin{subequations}\label{117}
  \begin{equation}
    \stackrel{\mathcal{D}}{\nabla}_{\dot{\textbf{E}}}(\mathrm{d}_1\textbf{E} (\boldsymbol{\gamma}, \boldsymbol{\gamma}_{ref})\dot{\boldsymbol{\gamma}}) = P_{\mathcal{D}_\textbf{E}}(\stackrel{\mathbb{G}_{id}}{\nabla}_ {\dot{\textbf{E}}}\mathrm{d}_1\textbf{E} (\boldsymbol{\gamma}, \boldsymbol{\gamma}_{ref})\dot{\boldsymbol{\gamma}} )
  \end{equation}
  and,
  \begin{equation}
     \stackrel{\mathcal{D}}{\nabla}_{\dot{\textbf{E}}}(\mathrm{d}_2\textbf{E} (\boldsymbol{\gamma}, \boldsymbol{\gamma}_{ref})\dot{\boldsymbol{\gamma}}_{ref}) = P_{\mathcal{D}_\textbf{E}}(\stackrel{\mathbb{G}_{id}}{\nabla}_ {\dot{\textbf{E}}}\mathrm{d}_2\textbf{E} (\boldsymbol{\gamma}, \boldsymbol{\gamma}_{ref})\dot{\boldsymbol{\gamma}}_{ref})
  \end{equation}
\end{subequations}
\textit{Note: We shall drop arguments and refer to $\textbf{E}(\boldsymbol{\gamma}, \boldsymbol{\gamma}_{ref})$ as $\textbf{E}$ and similarly refer to $\mathrm{d}_1\textbf{E} (\boldsymbol{\gamma}, \boldsymbol{\gamma}_{ref}) $ and $\mathrm{d}_2\textbf{E} (\boldsymbol{\gamma}, \boldsymbol{\gamma}_{ref})$ as $\mathrm{d}_1\textbf{E}$, $\mathrm{d}_2\textbf{E}$ respectively.}
\\
From \eqref{118} and \eqref{117},
\begin{align}\label{119}
  \stackrel{\mathcal{D}}{\nabla}_{\dot{\textbf{E}}}{\dot{\textbf{E}}} &= P_{\mathcal{D}_\textbf{E}}(\stackrel{\mathbb{G}_{id}}{\nabla}_ {\dot{\textbf{E}}}\mathrm{d}_1\textbf{E} \dot{\boldsymbol{\gamma}} + \stackrel{\mathbb{G}_{id}}{\nabla}_ {\dot{\textbf{E}}}\mathrm{d}_2\textbf{E} \dot{\boldsymbol{\gamma}}_{ref}) \\ \nonumber
   &=   P_{\mathcal{D}_\textbf{E}}(\frac{\mathrm{d}}{\mathrm{d}t}(\mathrm{d}_1\textbf{E} \dot{\boldsymbol{\gamma}}) + \stackrel{\mathbb{G}_{id}}{\nabla}_ {\dot{\textbf{E}}}\mathrm{d}_2\textbf{E} \dot{\boldsymbol{\gamma}}_{ref})
\end{align}
$\mathrm{d}_1 \textbf{E}(\boldsymbol{\gamma}, \boldsymbol{\gamma}_{ref})$ is a $(1,1)$ tensor which depends on two configurations at which error is defined. Therefore $\mathrm{d}_1 (\mathrm{d}_1 \textbf{E}): TM \times TM \to TM$ and $\mathrm{d}_2 (\mathrm{d}_1 \textbf{E}):TM \times TM \to TM$ are $(2,1)$ tensors. The first term in the bracket can be expressed in terms of these $(2,1)$ tensors as follows.
\begin{align}\label{120}
 \frac{\mathrm{d}}{\mathrm{d}t} (\mathrm{d}_1\textbf{E} \dot{\boldsymbol{\gamma}}) &= (\mathrm{d}_1 (\mathrm{d}_1 \textbf{E}))(\dot{\boldsymbol{\gamma}}, \dot{\boldsymbol{\gamma}}) + (\mathrm{d}_2 (\mathrm{d}_1 \textbf{E}))(\dot{\boldsymbol{\gamma}}_{ref}, \dot{\boldsymbol{\gamma}})\\ \nonumber
 &+ \mathrm{d}_1\textbf{E}(\stackrel{\mathcal{D}}{\nabla}_{\dot{\boldsymbol{\gamma}}} \dot{\boldsymbol{\gamma}})
\end{align}
In the last term in \eqref{120} we consider the constrained covariant derivative to differentiate $\dot{\boldsymbol{\gamma}}$ as $\mathrm{d}_1\textbf{E}$ is a transport map from $T_{\boldsymbol{\gamma}}M $ to $T_\textbf{E} M$. From equations \eqref{119} and \eqref{120},
\begin{align}\label{121}
  \stackrel{\mathcal{D}}{\nabla}_{\dot{\textbf{E}}}{\dot{\textbf{E}}} &= P_{\mathcal{D}_\textbf{E}}(
  (\mathrm{d}_1 (\mathrm{d}_1 \textbf{E}))(\dot{\boldsymbol{\gamma}}, \dot{\boldsymbol{\gamma}}) + (\mathrm{d}_2 (\mathrm{d}_1 \textbf{E}))(\dot{\boldsymbol{\gamma}}_{ref}, \dot{\boldsymbol{\gamma}})\\ \nonumber
  &+ \stackrel{\mathbb{G}_{id}}{\nabla}_ {\dot{\textbf{E}}}\mathrm{d}_2\textbf{E} \dot{\boldsymbol{\gamma}}_{ref}+ \mathrm{d}_1\textbf{E}(\stackrel{\mathcal{D}}{\nabla}_{\dot{\boldsymbol{\gamma}}}\dot{\boldsymbol{\gamma}})).
\end{align}
From \eqref{113}, \eqref{121}, and using $P_{\mathcal{D}_{\textbf{E}}}(\mathrm{d}_1\textbf{E} (u))=\mathrm{d}_1\textbf{E} (u)$,
\begin{align}\label{122}
 \mathrm{d}_1\textbf{E} (u)&=  \stackrel{\mathcal{D}}{\nabla}_{\dot{\textbf{E}}}\dot{\textbf{E}} - P_{\mathcal{D}(\textbf{E})}(
  (\mathrm{d}_1 (\mathrm{d}_1 \textbf{E}))(\dot{\boldsymbol{\gamma}}, \dot{\boldsymbol{\gamma}}) \\ \nonumber
  &+ (\mathrm{d}_2 (\mathrm{d}_1 \textbf{E}))(\dot{\boldsymbol{\gamma}}_{ref}, \dot{\boldsymbol{\gamma}})
  + \stackrel{\mathbb{G}_{id}}{\nabla}_ {\dot{\textbf{E}}} \mathrm{d}_2\textbf{E} \dot{\boldsymbol{\gamma}}_{ref}).
\end{align}
Substituting for $\stackrel{\mathcal{D}}{\nabla}_{\dot{\textbf{E}}}\dot{\textbf{E}}$ from \eqref{errdyn} in Lemma~\ref{lem1} we get \eqref{thm11}. As the error dynamics is AGAS, therefore, $u$ in \eqref{122} leads to AGAT of $\gamma_{ref}(t)$.
\end{proof}
\textit{Remark 1:} \eqref{thm11} is the solution to an underdetermined system of $m-$equations. Recall that $u \in \mathbb{R}^m$ is the representation of available independent controls in the $m-$ dimensional Euclidean space. \\
\textit{Remark 2:} The Nash embedding theorem ensures that a Riemannian manifold can be embedded isometrically in some Euclidean space. Therefore, Theorem~\ref{thm1} reduces the problem of almost-global tracking of a given reference trajectory on a Riemannian manifold to finding a navigation function $\psi$ and a compatible configuration error map $E$ on the manifold. It is well known that a navigation function exists on a compact manifold (\cite{kodi}, \cite{milnor}). The compatible error map is obtained from the embedding as will be seen in examples.\\
\textit{Remark 3:} The control law given in \cite{buloandmurray} uses a single transport map $\mathcal{T}(\gamma, \gamma_{ref})$ to transport $\dot{\gamma}_{ref}$ to $T_{\gamma} M$. Instead of this, we have two transport maps $\mathrm{d}_1 E $ and $\mathrm{d}_2 E$ to transport $\dot{\gamma}$ and $\dot{\gamma}_{ref}$ respectively to $T_{E(\gamma, \gamma_{ref})} M$ along the error trajectory $\dot{E}$. The tracking error function $\Psi: M \times M \to \mathbb{R}$ in \cite{bulo} is similar to $\psi(E)$. However, instead of the velocity error along $\dot{\gamma}(t)$, we consider the velocity error $\dot{E}$ along $E(t)$. This is the essential difference in our approach to tracking a trajectory for an SMS.\\
\textit{Remark 4:} In this remark we follow the procedure in \cite{buloandmurray} to obtain error dynamics and show that the theorem in \cite{kodi} cannot be applied to conclude AGAT even if a navigation function is chosen as a potential function. Let us consider the tracking error function $\Psi: M \times M \to \mathbb{R}$ defined as $\Psi = \psi \circ E$ for a navigation function $\psi$ and a compatible error map $E$. The control law for local tracking of $\gamma_{ref}$ in \cite{buloandmurray} is
\begin{align*}
F(\gamma, \dot{\gamma}) &= -d_1 \Psi(\gamma, \gamma_{ref}) - F_{diss}(\dot{\gamma} -\mathcal{T}(\gamma, \gamma_{ref}) \dot{\gamma}_{ref})\\
&+{\mathbb{G}}^\flat(\stackrel{\mathbb{G}}{\nabla}_{\dot{\gamma}}  \mathcal{T}(\gamma, \gamma_{ref}) \dot{\gamma}_{ref}(t)+ \frac{\mathrm{d}}{\mathrm{d}t} (\mathcal{T} \dot{\gamma}_{ref}) )
\end{align*}
where $\mathcal{T}(\gamma, \gamma_{ref}):T_{\gamma_{ref}}M \to T_{\gamma}M $ is a transport map compatible with the error function $\Psi$ (\cite{buloandmurray}). The velocity error is defined along $\gamma(t)$ as $v'_e \coloneqq \dot{\gamma} - \mathcal{T} \dot{\gamma}_{ref}$. The error dynamics in this case is given by
\begin{subequations}
      \begin{equation}\label{eq:23}
    \dot{E} = \mathrm{d}_1E . v'_{e}
    \end{equation}
    \begin{equation} \label{eq:54}
     \dot{v'}_e = \mathbb{G}^{\sharp} (-\mathrm{d}\psi (E) \mathrm{d}_1 E + F_{diss}(v'_{e})) - I(v'_e,\mathcal{T}\dot{\gamma}_{ref} ) -C(v'_{e})
    \end{equation}
    \end{subequations}
where $I(v'_e,\mathcal{T}\dot{\gamma}_{ref} ) =\Gamma_{ij}^k {v'_e}^i{(\mathcal{T}\dot{\gamma}_{ref})}^j$ and $C(v'_e) = \Gamma_{ij}^k {v'_e}^i {v'_e}^j$. \eqref{eq:23} results from the following equivalent compatibility condition
      \begin{equation}\label{eq:93}
 \mathrm{d}_2 E(\gamma, \gamma_{ref}) = - \mathrm{d}_1 E(\gamma, \gamma_{ref}) \mathcal{T}(\gamma, \gamma_{ref})
\end{equation}
and \eqref{eq:54} is given by the following identity
\begin{align}\label{eq:5}
\stackrel{\mathbb{G}}{\nabla}_{\dot{\gamma}(t)} v'_{e} &= \stackrel{\mathbb{G}}{\nabla}_{\dot{\gamma}} (\dot{\gamma}- \mathcal{T}(\gamma, \gamma_{ref}). \dot{\gamma}_{ref}(t))\\ \nonumber
&= \mathbb{G}^{\sharp} (F_{PD} + F_{FF}) - \mathbb{G}^{\sharp}F_{FF}\\ \nonumber
&= -\mathbb{G}^{\sharp} (d_1 \Psi (\gamma, \gamma_{ref})) +\mathbb{G}^{\sharp} (F_{diss}(v'_{e})). \nonumber
\end{align}

Linearizing \eqref{eq:23}-\eqref{eq:54} about an equilibrium state $(E*,0)$,
      \begin{align*}
        &\begin{pmatrix}
        \dot{E}\\
        \dot{v}'_e
        \end{pmatrix}= \\
        &\begin{pmatrix}
        0 & \mathrm{d}_1E\\
        -G^{ij} {\mathrm{d}^2 \psi}(E^*)\mathrm{d}_1E & G^{ij} \circ F_{diss}(E^*,0) - I^{\flat} (\mathcal{T} \dot{\gamma}_{ref})
        \end{pmatrix} \begin{pmatrix}
        E\\ v'_e\\
        \end{pmatrix}.
        \end{align*}
As $I^{\flat} (\mathcal{T} \dot{\gamma}_{ref})$ is a time dependent term, the flow of error dynamics around $(E^*,0)$ cannot be determined by the flow of $- \mathrm{d}\psi$. As a result, the lifting property of dissipative systems cannot be used to establish AGAS of the error dynamics for the PD+FF tracking control law in \cite{buloandmurray}.\\
\section{AGAT for an SMS on a Lie group}
In this section, we utilize the idea of two transport maps originating from the configuration error map to study AGAT for an SMS on a compact Lie group. The configuration error map is defined using the group operation. Therefore, the problem of AGAT for an SMS on a compact Lie group is reduced to choosing a compatible navigation function.
\subsection{Preliminaries}
Let $G$ be a Lie group and let $\mathfrak{g}$ denote its Lie algebra. Let $\phi: G \times G \to G$ be the left group action in the first argument defined as $\phi(g,h) \coloneq  L_{g} (h)= gh $ for all $g$, $h \in G$. The Lie bracket on $\mathfrak{g}$ is $[,]$. The adjoint map $ad_\xi : \mathfrak{g} \to \mathfrak{g}$ for $\xi \in \mathfrak{g}$ and defined as $ad_\xi \eta \coloneq [\xi,\eta]$. Let $\mathbb{I} :\mathfrak{g} \to \mathfrak{g}^*$ be an isomorphism on the Lie algebra to its dual and the inverse is denoted by $\mathbb{I}^\sharp: \mathfrak{g}^* \to \mathfrak{g}$. $\mathbb{I}$ induces a
left invariant metric on $G$ (see section 5.3 in \cite{bulo}). This metric on $G$ is denoted by $\mathbb{G}_{\mathbb{I}}$ and defined as $\mathbb{G}_{\mathbb{I}}(g).(X_g,Y_g) \coloneq \langle \mathbb{I}(T_gL_{g^{-1}} (X_g)),T_gL_{g^{-1}} (Y_g)\rangle$ for all $g \in G$ and $X_g$, $Y_g \in T_g G$. The equations of motion for the SMS $(G, {\mathbb{I}},F)$ where $F \in \mathfrak{g}^*$ are given by
\begin{align}\label{dynliegrp}
\xi &= T_g L_{g^{-1}} \dot{g},\\ \nonumber
\dot{\xi} - I^\sharp ad^*_\xi I \xi &= \mathbb{I}^{\sharp} F
\end{align}
where $g(t)$ describes the system trajectory. $\xi(t)$ is called the body velocity of $g(t)$.

\begin{lemma}\label{lem3}
Given a differentiable parameterized curve $\gamma: \mathbb{R} \to G$ and a vector field $X$ along $\gamma(t)$ we have the following equality
\[ \stackrel{\mathbb{G}_{\mathbb{I}}}{\nabla}_{\dot{\gamma}} X = T_eL_{\gamma}( \frac{\mathrm{d}}{\mathrm{d}t}
(T_{\gamma}L_ {\gamma^{-1}} X(t) ) +\stackrel{\mathfrak{g}}{\nabla}_{\xi(t)} T_{\gamma}L_ {\gamma^{-1}}X(t))
\]
where $\stackrel{\mathfrak{g}}{\nabla}$ is the bilinear map defined as
\begin{equation}\label{eq:60}
 \stackrel{\mathfrak{g}}{\nabla}_\eta \nu = \frac{1}{2} [\eta, \nu] - \frac{1}{2} \mathbb{I}^ \sharp ( ad^*_{\eta} \mathbb{I}\nu +  ad^*_{\nu} \mathbb{I}\eta  )
\end{equation}
for $\nu$, $\eta \in \mathfrak{g}$.
\end{lemma}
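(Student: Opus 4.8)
The plan is to first identify the bilinear map $\stackrel{\mathfrak{g}}{\nabla}$ of \eqref{eq:60} as the restriction of the Riemannian connection $\stackrel{\mathbb{G}_{\mathbb{I}}}{\nabla}$ to left-invariant vector fields, and then to obtain the stated identity by expanding $X$ and $\dot{\gamma}$ in a left-invariant frame. Fix a basis $\{e_1,\dots,e_n\}$ of $\mathfrak{g}$, $n = \dim G$, and let $E_i$ denote the left-invariant vector field on $G$ with $E_i(g) = T_eL_g\, e_i$. Because $\mathbb{G}_{\mathbb{I}}$ is left invariant, each function $g \mapsto \mathbb{G}_{\mathbb{I}}(g)(E_i(g),E_j(g)) = \langle \mathbb{I} e_i, e_j \rangle$ is constant on $G$, and $[E_i,E_j] = E_{[e_i,e_j]}$. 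Feeding the triple $(E_i,E_j,E_k)$ into the Koszul formula, the three function-differentiation terms vanish and one is left with
\[
2\,\mathbb{G}_{\mathbb{I}}\left(\stackrel{\mathbb{G}_{\mathbb{I}}}{\nabla}_{E_i} E_j,\, E_k\right) = \mathbb{G}_{\mathbb{I}}([E_i,E_j],E_k) - \mathbb{G}_{\mathbb{I}}([E_i,E_k],E_j) - \mathbb{G}_{\mathbb{I}}([E_j,E_k],E_i).
\]
Rewriting each term on the right through $\mathbb{I}$ — using that $\mathbb{I}$ is symmetric and that $\langle ad^*_{\eta}\alpha, \zeta\rangle = \langle \alpha, [\eta,\zeta]\rangle$ — shows that $\stackrel{\mathbb{G}_{\mathbb{I}}}{\nabla}_{E_i} E_j$ is again left invariant, with value at $e$ equal to $\frac{1}{2}[e_i,e_j] - \frac{1}{2}\mathbb{I}^{\sharp}(ad^*_{e_i}\mathbb{I}e_j + ad^*_{e_j}\mathbb{I}e_i) = \stackrel{\mathfrak{g}}{\nabla}_{e_i} e_j$. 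Hence $\stackrel{\mathbb{G}_{\mathbb{I}}}{\nabla}_{E_i} E_j = E_{\stackrel{\mathfrak{g}}{\nabla}_{e_i} e_j}$ as vector fields on $G$.

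For a general vector field $X$ along $\gamma$, set $b(t) = T_{\gamma(t)}L_{\gamma(t)^{-1}} X(t) = \sum_i b^i(t)\, e_i \in \mathfrak{g}$, so that $X(t) = \sum_i b^i(t)\, E_i(\gamma(t))$; by the first relation of \eqref{dynliegrp}, $\dot{\gamma}(t) = T_eL_{\gamma(t)} \xi(t) = \sum_j \xi^j(t)\, E_j(\gamma(t))$. The Leibniz rule for the covariant derivative along $\gamma$, together with $C^{\infty}$-linearity of $\stackrel{\mathbb{G}_{\mathbb{I}}}{\nabla}$ in its lower slot, gives
\[
\stackrel{\mathbb{G}_{\mathbb{I}}}{\nabla}_{\dot{\gamma}} X = \sum_i \dot{b}^i(t)\, E_i(\gamma(t)) + \sum_{i,j} b^i(t)\xi^j(t)\left(\stackrel{\mathbb{G}_{\mathbb{I}}}{\nabla}_{E_j} E_i\right)(\gamma(t)).
\]
Substituting $\stackrel{\mathbb{G}_{\mathbb{I}}}{\nabla}_{E_j} E_i = E_{\stackrel{\mathfrak{g}}{\nabla}_{e_j} e_i}$ from the first paragraph, factoring $T_eL_{\gamma(t)}$ out by linearity, and using bilinearity of $\stackrel{\mathfrak{g}}{\nabla}$ to collapse $\sum_{i,j} b^i(t)\xi^j(t)\stackrel{\mathfrak{g}}{\nabla}_{e_j} e_i = \stackrel{\mathfrak{g}}{\nabla}_{\xi(t)} b(t)$, the right-hand side becomes $T_eL_{\gamma(t)}\big(\dot{b}(t) + \stackrel{\mathfrak{g}}{\nabla}_{\xi(t)} b(t)\big)$. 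Since $b(t) = T_{\gamma(t)}L_{\gamma(t)^{-1}} X(t)$ and therefore $\dot{b}(t) = \frac{\mathrm{d}}{\mathrm{d}t}\big(T_{\gamma(t)}L_{\gamma(t)^{-1}} X(t)\big)$, this is exactly the identity asserted in the lemma.

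The step that carries the real content is the first one: the Koszul computation identifying $\stackrel{\mathbb{G}_{\mathbb{I}}}{\nabla}_{E_i} E_j$ with the left-invariant extension of $\stackrel{\mathfrak{g}}{\nabla}_{e_i} e_j$. Two facts make it go through — that the function-differentiation terms of the Koszul formula vanish (which is precisely left invariance of $\mathbb{G}_{\mathbb{I}}$, forcing pairwise inner products of left-invariant fields to be constant), and that each surviving term $\mathbb{G}_{\mathbb{I}}([E_i,E_k],E_j) = \langle \mathbb{I}[e_i,e_k],e_j\rangle$ rearranges to $\langle ad^*_{e_i}\mathbb{I}e_j, e_k\rangle$ through symmetry of $\mathbb{I}$ and the definition of the coadjoint map (and likewise for $\langle \mathbb{I}[e_j,e_k],e_i\rangle$). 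This is the standard derivation of the left-invariant (Euler-Poincar\'e) connection; see \cite{bulo}, \S5.3. Everything after that — the passage from left-invariant vector fields to a vector field along an arbitrary differentiable curve — is a routine frame expansion with time-dependent coefficients, requiring nothing of $\gamma$ beyond differentiability, so I do not anticipate a genuine obstacle there.
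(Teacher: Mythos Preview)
Your proof is correct and follows essentially the same route as the paper: expand $X$ and $\dot\gamma$ in the left-invariant frame $\{(e_i)_L\}$ and apply the Leibniz rule together with $C^\infty$-linearity in the lower slot. The paper's argument is terser in that it simply uses $\stackrel{\mathbb{G}_{\mathbb{I}}}{\nabla}_{(e_j)_L}(e_k)_L = \big(\stackrel{\mathfrak{g}}{\nabla}_{e_j} e_k\big)_L$ without derivation (implicitly deferring to \cite{bulo}, \S5.3), whereas you supply the Koszul computation that justifies this identification.
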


\begin{proof}
Let $\{ e_1, \dotsc , e_n\}$ be a basis for $\mathfrak{g}$. Let,
\begin{align*} X(\gamma(t)) &= T_eL_{\gamma} \big(\sum_{i=1}^{n} v_{X(\gamma(t))}^i(t) e_i   \big) \\
&= \sum_{i=1}^{n} v_{X(\gamma(t))}^i(t) (e_i)_L (\gamma(t))
\end{align*}
where $v_X= T_{\gamma}L_ {\gamma^{-1}}X(\gamma(t))$ and $(e_i)_L(g) = T_eL_g e_i $ and therefore $(e_i)_L(g)$
is a basis for $T_gG$ for all $g \in G$.
Similarly let
\[ \dot{\gamma}(t) = T_eL_{\gamma} \big(\sum_{j=1}^{n} v_{\gamma}^j(t) e_j   \big)
= \big(\sum_{j=1}^{n} v_{\gamma}^j(t) (e_j)_L (\gamma(t))  \big)
\]
where $v_{\gamma}(t)= T_{\gamma}L_ {\gamma^{-1}}\dot{\gamma}(t)$.
Using properties of affine connection we have,
\begin{align*}
   \stackrel{\mathbb{G}_{\mathbb{I}}}{\nabla}_{\dot{\gamma}} X
   &=  \stackrel{\mathbb{G}_{\mathbb{I}}}{\nabla}_{ v_{\gamma}^j(t) (e_j)_L (\gamma(t))} v_{X}^i(t) (e_i)_L (\gamma(t)) \\
  &=  \frac{\mathrm{d}}{\mathrm{d}t}(v_X^i)(e_i)_L(\gamma(t))
  + v_X^k v_\gamma^j (\stackrel{\mathfrak{g}}{\nabla}_{(e_j)_L}(e_k)_L ) (\gamma(t)) \\
  &= T_eL_\gamma \big( \frac{\mathrm{d}}{\mathrm{d}t}(v_X^i) \big)e_i +  v_X^k v_\gamma^j \stackrel{\mathfrak{g}}{\nabla}_{(e_j)}(e_k)  \\
  &= T_eL_{\gamma}( \frac{\mathrm{d}}{\mathrm{d}t} (T_{\gamma}L_ {\gamma^{-1}} X(t))
  +\stackrel{\mathfrak{g}}{\nabla}_{\xi(t)} T_{\gamma}L_ {\gamma^{-1}}X(t))
\end{align*}
\end{proof}
\begin{definition}
 The configuration error $E: G \times G \to G$ map is defined as
\begin{equation} \label{errmap}
E(g, h) = L_{h}g^{-1}.
\end{equation}
\end{definition}
\subsection{AGAT for an SMS on a compact Lie group}
\begin{theorem}\label{thm2}
  (AGAT for Lie group) Let $G$ be a compact Lie group and $\mathbb{I}: \mathfrak{g} \to \mathfrak{g}^*$ be an isomorphism on the Lie algebra. Consider the SMS on the Riemannian manifold $(G, {\mathbb{I}})$ given by \eqref{dynliegrp} and a smooth reference trajectory $g_r :\mathbb{R} \to G$ on the Lie group which has bounded velocity. Let $\psi : G \to \mathbb{R}$ be a navigation function compatible with the error map in \eqref{errmap}. Then there exists an open dense set $S \in G \times \mathfrak{g}$ such that AGAT of $g_r$ is achieved for all $(g(0), \xi(0)) \in S$ with $u = \mathbb{I}^\sharp(F)$ in \eqref{dynliegrp} given by the following equation
  \begin{align}\label{thm2eqn}
    u&= - g^{-1}_r \mathbb{G}_{\mathbb{I}}^\sharp (-K_p \mathrm{d}\psi (E) + F_{diss}\dot{E}) g + g^{-1} (\stackrel{\mathfrak{g}}{\nabla}_\eta \eta  \\ \nonumber
    &+  \frac{\mathrm{d}}{\mathrm{d}t}{E^{-1}}\mathrm{d}_2E(\dot{g}_r)) g - I^\sharp ad^*_\xi I \xi
  \end{align}
  where $\eta$ is the body velocity of the error trajectory and $K_p >0$.
\end{theorem}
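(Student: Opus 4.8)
The plan is to specialise the proof of Theorem~\ref{thm1} to the Riemannian manifold $(G,\mathbb{G}_{\mathbb{I}})$ equipped with the group-based error map \eqref{errmap}, choosing $u$ so that the error trajectory $E(t)=E(g(t),g_r(t))=L_{g_r(t)}g(t)^{-1}$ obeys the lifted navigation-function dynamics $\stackrel{\mathbb{G}_{\mathbb{I}}}{\nabla}_{\dot E}\dot E=\mathbb{G}_{\mathbb{I}}^{\sharp}(-K_p\,\mathrm d\psi(E)+F_{diss}(\dot E))$, i.e.\ \eqref{errdyn} read on $(G,\mathbb{G}_{\mathbb{I}})$, and then invoking the argument of Lemma~\ref{lem1}. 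A key simplification over the general case is that on a Lie group the transport map $\mathrm d_1 E$ is a linear isomorphism, so the control can be written out explicitly instead of as the solution of an underdetermined system.

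\emph{Kinematics.} Let $\xi=T_gL_{g^{-1}}\dot g$, $\xi_r=T_{g_r}L_{g_r^{-1}}\dot g_r$ be the body velocities of the system and the reference, and $\eta=T_EL_{E^{-1}}\dot E$ that of the error. Differentiating $E=L_{g_r}g^{-1}$ and translating to the identity gives $\eta=\mathrm{Ad}_g(\xi_r-\xi)$ and $E^{-1}\mathrm d_2 E(\dot g_r)=\mathrm{Ad}_g\xi_r$. Differentiating once more, using $\tfrac{\mathrm d}{\mathrm dt}\mathrm{Ad}_{g(t)}=\mathrm{Ad}_g\circ ad_\xi$, yields $\dot\eta=\tfrac{\mathrm d}{\mathrm dt}(\mathrm{Ad}_g\xi_r)-\mathrm{Ad}_g\dot\xi$, where $\tfrac{\mathrm d}{\mathrm dt}(\mathrm{Ad}_g\xi_r)=\mathrm{Ad}_g([\xi,\xi_r]+\dot\xi_r)$ depends only on the measured state and the reference data, not on $\dot\xi$.

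\emph{Choice of control.} By Lemma~\ref{lem3}, applied to the curve $E(t)$ with the vector field $\dot E(t)$ along it, $\stackrel{\mathbb{G}_{\mathbb{I}}}{\nabla}_{\dot E}\dot E=T_eL_E(\dot\eta+\stackrel{\mathfrak g}{\nabla}_\eta\eta)$; hence the target error dynamics is equivalent, after applying $T_EL_{E^{-1}}$, to $\dot\eta+\stackrel{\mathfrak g}{\nabla}_\eta\eta=T_EL_{E^{-1}}\mathbb{G}_{\mathbb{I}}^{\sharp}(-K_p\,\mathrm d\psi(E)+F_{diss}(\dot E))$. Substituting the expression for $\dot\eta$ above and then the Euler--Poincar\'e equation $\dot\xi=\mathbb{I}^{\sharp}ad^{*}_{\xi}\mathbb{I}\xi+u$ from \eqref{dynliegrp}, and solving for $u$ — a uniquely determined linear equation since $\mathrm{Ad}_g$ is invertible — produces exactly \eqref{thm2eqn}, once the conjugation notations $g^{-1}(\cdot)g=\mathrm{Ad}_{g^{-1}}$ on $\mathfrak g$ and $g_r^{-1}(\cdot)g=\mathrm{Ad}_{g^{-1}}\circ T_EL_{E^{-1}}$ on $T_EG$ are recognised. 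I expect this bookkeeping of left-translations and $\mathrm{Ad}$/$ad$ maps, and verifying that the algebra collapses to precisely the four terms of \eqref{thm2eqn}, to be the main obstacle.

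\emph{Convergence.} With this $u$ the error trajectory is an SMS on the compact Riemannian manifold $(G,\mathbb{G}_{\mathbb{I}})$ generated by the tangent lift of the navigation function $K_p\psi$ and the dissipative force $F_{diss}$, so the proof of Lemma~\ref{lem1} applies verbatim (Lyapunov function $E_{cl}=K_p\psi(E)+\tfrac12\langle\mathbb{I}\eta,\eta\rangle$; positive limit set $\{(p,0):\mathrm d\psi(p)=0\}$; linearisation near each critical point governed by $Hess\,\psi$; the stable manifolds of the non-minimum critical points nowhere dense because $\psi$ is a navigation function) and yields an open dense set of error states from which $(E(t),\eta(t))\to(q_m,0)$. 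Compatibility of $\psi$ with \eqref{errmap} forces $q_m=E(g,g)=e$, so $E(t)\to e$, i.e.\ $g(t)\to g_r(t)$, while $\eta=\mathrm{Ad}_g(\xi_r-\xi)\to0$ forces $\xi(t)-\xi_r(t)\to0$ since $\mathrm{Ad}$ is invertible and $G$ is compact. Pulling the open dense error set back through the diffeomorphism $(g(0),\xi(0))\mapsto(E(0),\eta(0))$ gives the set $S\subset G\times\mathfrak g$ of the statement, and AGAT of $g_r$ follows.
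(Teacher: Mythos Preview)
Your proposal is correct and follows essentially the same route as the paper: impose the lifted navigation-function error dynamics \eqref{errdynlie}, use Lemma~\ref{lem3} to rewrite $\stackrel{\mathbb{G}_{\mathbb{I}}}{\nabla}_{\dot E}\dot E$ in body coordinates, substitute the Euler--Poincar\'e equation for $\dot\xi$, and solve for $u$; then invoke Lemma~\ref{lem1} for AGAS of the error. The only organisational difference is that the paper first splits $\dot E=\mathrm d_1E(\dot g)+\mathrm d_2E(\dot g_r)$ and applies Lemma~\ref{lem3} to each transport-map piece separately (mirroring the structure of Theorem~\ref{thm1}), whereas you apply Lemma~\ref{lem3} once to $\dot E$ as a whole and then compute $\dot\eta$ directly from $\eta=\mathrm{Ad}_g(\xi_r-\xi)$; both computations collapse to the same expression $T_eL_E(\dot\eta+\stackrel{\mathfrak g}{\nabla}_\eta\eta)$ and hence to \eqref{thm2eqn}. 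Your convergence paragraph is in fact slightly more explicit than the paper's, spelling out that $q_m=e$, that $\eta\to0$ forces $\xi-\xi_r\to0$, and that the dense set $S$ arises as the pullback under the diffeomorphism $(g,\xi)\mapsto(E,\eta)$.
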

\begin{figure}[h!]
  \centering
  \includegraphics[scale=0.35]{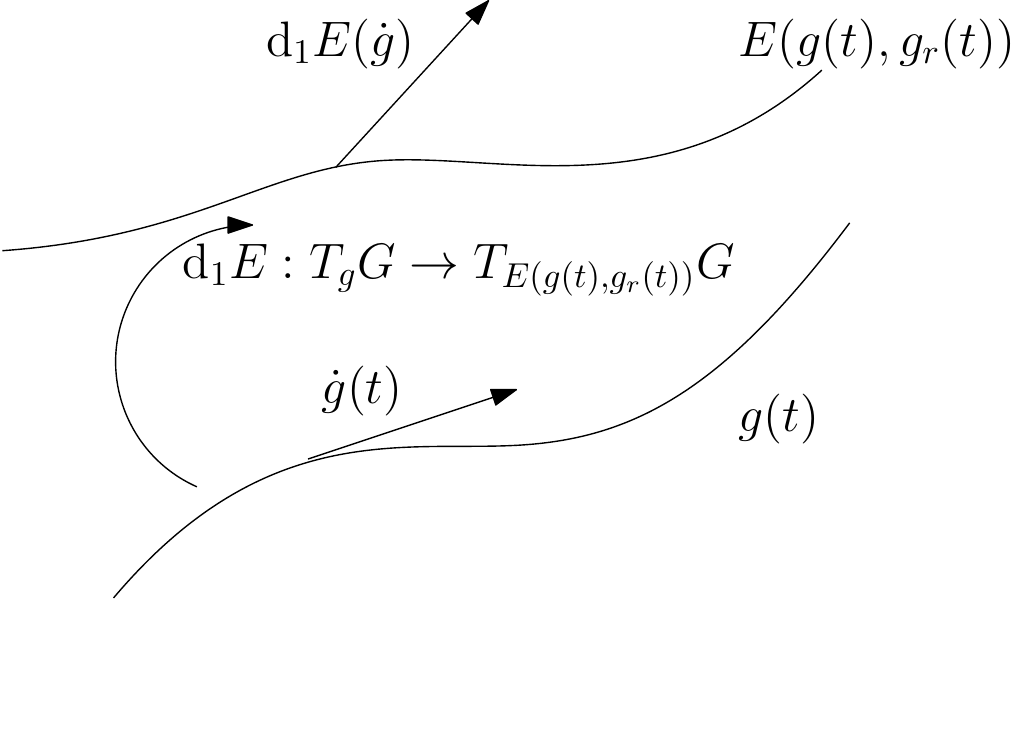}
  \caption{$\mathrm{d}_1E(\dot{g})$ is a vector field along $E(t)$}
\end{figure}

\begin{proof}
The error trajectory is $E(g(t), g_r(t))$ where $E$ is defined in \eqref{errmap}. The error dynamics on $(G, {\mathbb{I}})$ is similar to \eqref{errdyn} with the appropriate Riemannian connection as follows
\begin{equation}\label{errdynlie}
\stackrel{\mathbb{G}_{\mathbb{I}}}{\nabla}_{\dot{E}} \dot{E}  = \mathbb{G}_{\mathbb{I}}^\sharp (-K_p \mathrm{d}\psi (E) + F_{diss}\dot{E})
\end{equation}
As $(E, \psi)$ is a compatible pair, by Lemma ~\ref{lem1}, the error dynamics is AGAS about the minimum of $\psi$. The derivative of the error trajectory is given by \eqref{eq:73}. Therefore, $\mathrm{d}_1E(\dot{g})= -g_r g^{-1} \dot{g}g^{-1} = T_g L_{g_r g^{-1}}R_{g^{-1}}\dot{g}$ and $\mathrm{d}_2 E(\dot{g}_r)= \dot{g}_r g^{-1} = T_{g_r}R_{g^{-1}} \dot{g}_r$ and,
\begin{align}\label{eq:55}
 \stackrel{\mathbb{G}_{\mathbb{I}}}{\nabla}_{\dot{E}} \dot{E} &=
 \stackrel{\mathbb{G}_{\mathbb{I}}}{\nabla}_{\dot{E} }(\mathrm{d}_1E(\dot{g}) + \mathrm{d}_2E(\dot{g}_r) ).\\ \nonumber
\end{align}
As $\mathrm{d}_1E(\dot{g})$ and $\mathrm{d}_2E(\dot{g}_r)$ are vector fields along $E(t)$, we use Lemma ~\ref{lem3} to expand $ \nabla_{\dot{E} }(\mathrm{d}_1E(\dot{g}))$ as
\begin{align*}
 \stackrel{\mathbb{G}_{\mathbb{I}}}{\nabla}_{\dot{E}} \mathrm{d}_1E(\dot{g}) &= T_eL_{E}( \frac{\mathrm{d}}{\mathrm{d}t}
(T_{E}L_ {E^{-1}} \mathrm{d}_1E(\dot{g})) \\ &+\stackrel{\mathfrak{g}}{\nabla}_{T_{E}L_ {E^{-1}} \dot{E}} T_{E}L_ {E^{-1}}\mathrm{d}_1E(\dot{g}) )\\
&= E( \frac{\mathrm{d}}{\mathrm{d}t}
(E^{-1}\mathrm{d}_1E(\dot{g}))  +\stackrel{\mathfrak{g}}{\nabla}_ { E^{-1}\dot{E}} {E^{-1}}\mathrm{d}_1E(\dot{g}) )\\
&= E( g \dot{\xi} g^{-1} + \stackrel{\mathfrak{g}}{\nabla}_ { E^{-1}\dot{E}} {E^{-1}}\mathrm{d}_1E(\dot{g}))
\end{align*}
Similarly, the second term in \eqref{eq:55} is
\begin{align*}
\stackrel{\mathbb{G}_{\mathbb{I}}}{\nabla}_{\dot{E} }\mathrm{d}_2E(\dot{g}_r) &=  \frac{\mathrm{d}}{\mathrm{d}t}
(E^{-1}\mathrm{d}_2E(\dot{g}_r))+   \stackrel{\mathfrak{g}}{\nabla}_ { E^{-1}\dot{E}} {E^{-1}}\mathrm{d}_2E(\dot{g}_r)
\end{align*}
Hence from \eqref{eq:55},
\begin{align}\label{eq:51}
\stackrel{\mathbb{G}_{\mathbb{I}}}{\nabla}_{\dot{E}} \dot{E} &= E( g \dot{\xi} g^{-1} + \stackrel{\mathfrak{g}}{\nabla}_ { E^{-1}\dot{E}} {E^{-1}}\mathrm{d}_1E(\dot{g})\\ \nonumber
&+  \frac{\mathrm{d}}{\mathrm{d}t} (E^{-1}\mathrm{d}_2E(\dot{g}_r))+   \stackrel{\mathfrak{g}}{\nabla}_ { E^{-1}\dot{E}} {E^{-1}}\mathrm{d}_2E(\dot{g}_r))\\ \nonumber
&=  E( g \dot{\xi} g^{-1} + \stackrel{\mathfrak{g}}{\nabla}_ { E^{-1}\dot{E}} E^{-1}\dot{E} +  \frac{\mathrm{d}}{\mathrm{d}t}{E^{-1}}\mathrm{d}_2E(\dot{g}_r))  \nonumber
\end{align}
Let $u_1=\mathbb{G}_{\mathbb{I}}^\sharp (- \mathrm{d}\psi (E) + F_{diss}\dot{E})$. From \eqref{dynliegrp}, $ \dot{\xi} = u + I^\sharp ad^*_\xi I \xi$ and $\eta= E^{-1} \dot{E}$, therefore, \eqref{eq:55} is
\begin{align}
u_1 &= E(-g(u+I^\sharp ad^*_\xi I \xi ) g^{-1} +\stackrel{\mathfrak{g}}{\nabla}_\eta \eta  +  \frac{\mathrm{d}}{\mathrm{d}t}({E^{-1}}\mathrm{d}_2E(\dot{g}_r)))  \nonumber
\end{align}
which gives
\begin{align}\label{eq:53}
  u &=  g^{-1} (- E^{-1}u_1+ \stackrel{\mathfrak{g}}{\nabla}_\eta \eta  +  \frac{\mathrm{d}}{\mathrm{d}t}{E^{-1}}\mathrm{d}_2E(\dot{g}_r)) g - I^\sharp ad^*_\xi I \xi \\ \nonumber
  &= - g^{-1}_r u_1 g + g^{-1} (\stackrel{\mathfrak{g}}{\nabla}_\eta \eta  +  \frac{\mathrm{d}}{\mathrm{d}t}{E^{-1}}\mathrm{d}_2E(\dot{g}_r)) g - I^\sharp ad^*_\xi I \xi \nonumber
\end{align}
\end{proof}
\begin{figure}[!ht]
  \begin{subfigure}[b]{0.18\textwidth}
  \centering
  \includegraphics[scale=0.055]  {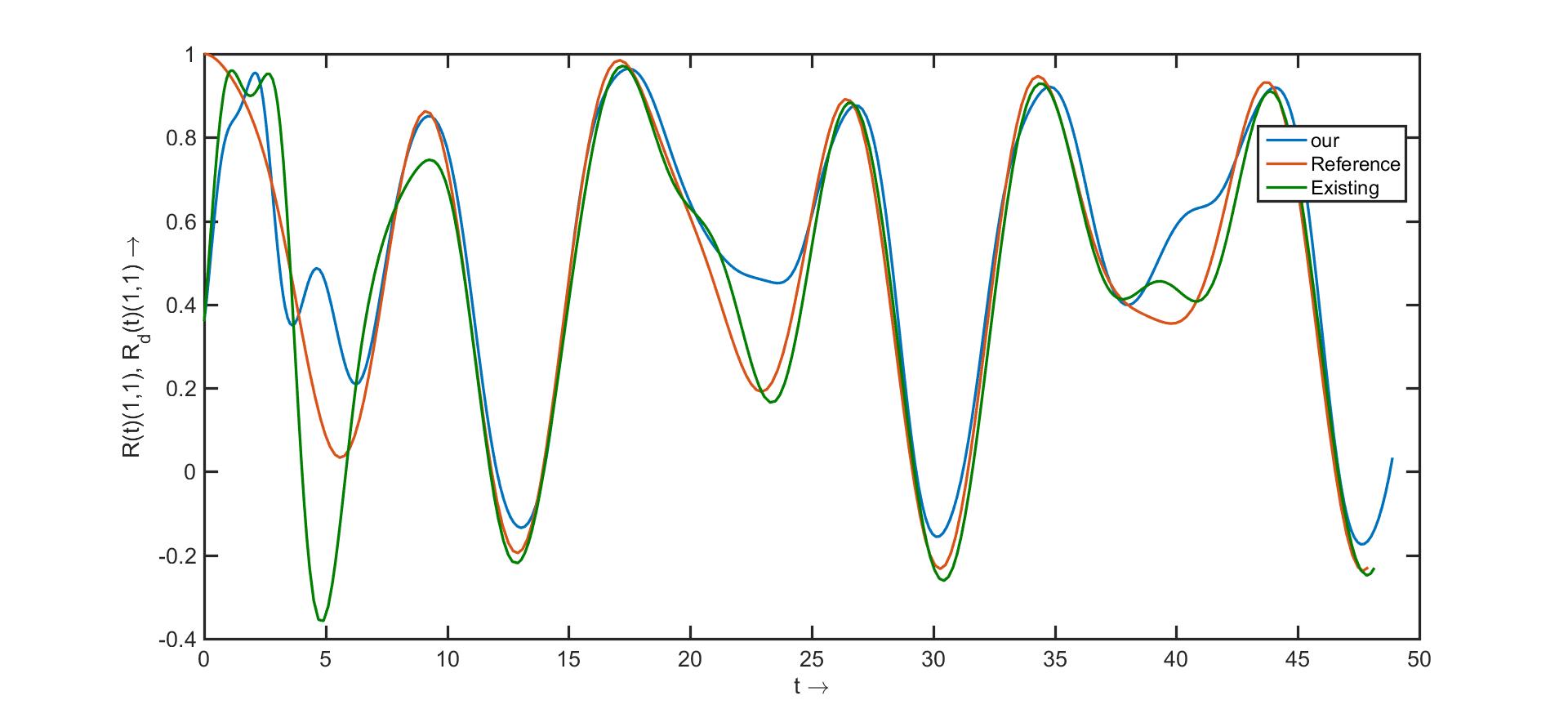}
  \caption{$(1,1)$(t)}\label{fig7}
  \end{subfigure}
\begin{subfigure}[b]{0.18\textwidth}
  \centering
  \includegraphics[scale=0.052]{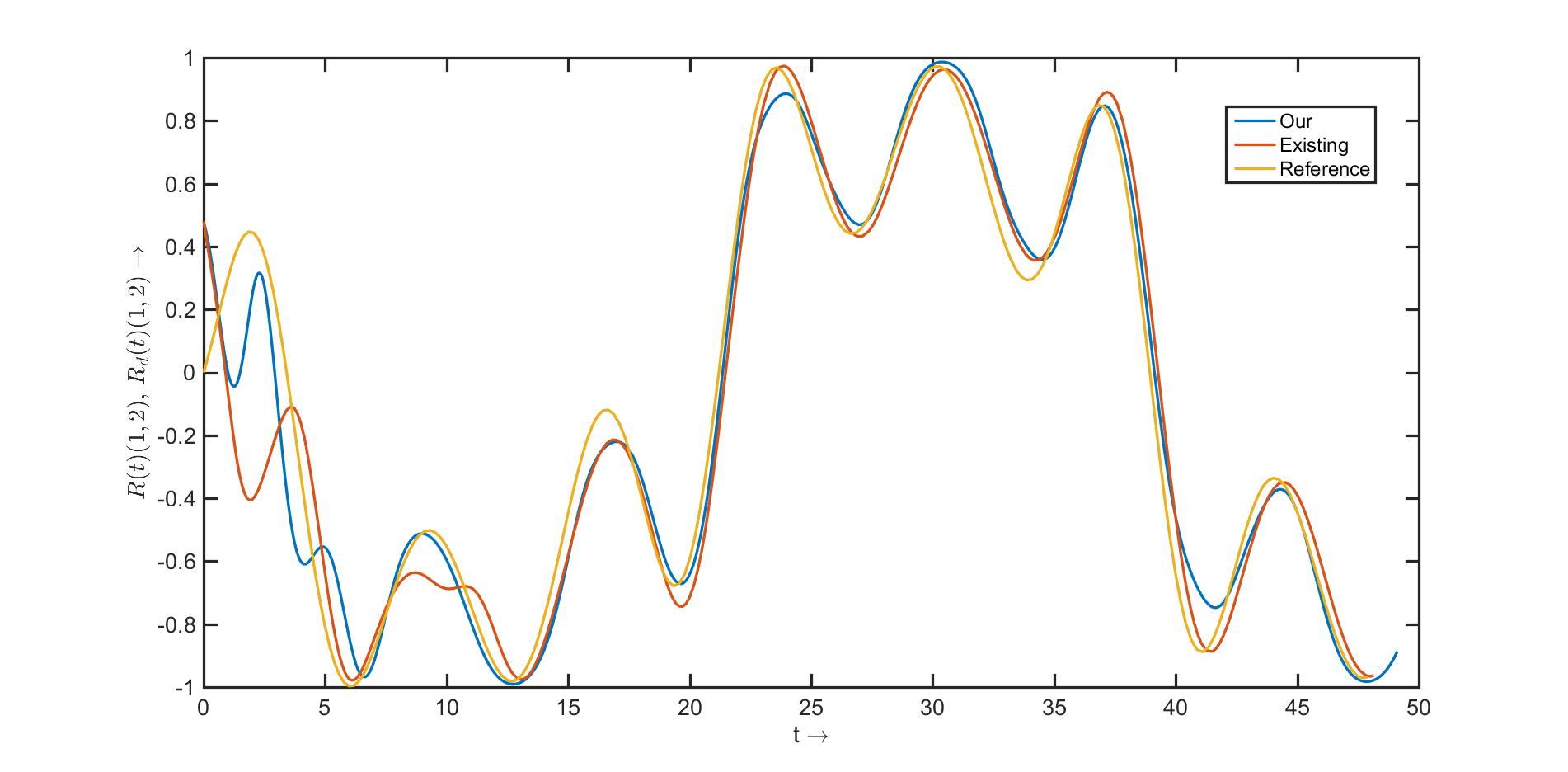}
  \caption{$(1,2)$(t)}\label{fig8}
\end{subfigure}
  \caption{A comparison of tracking results}
\end{figure}
\begin{figure}[ht]
  \begin{subfigure}[b]{0.18\textwidth}
  \centering
  \includegraphics[scale=0.06]  {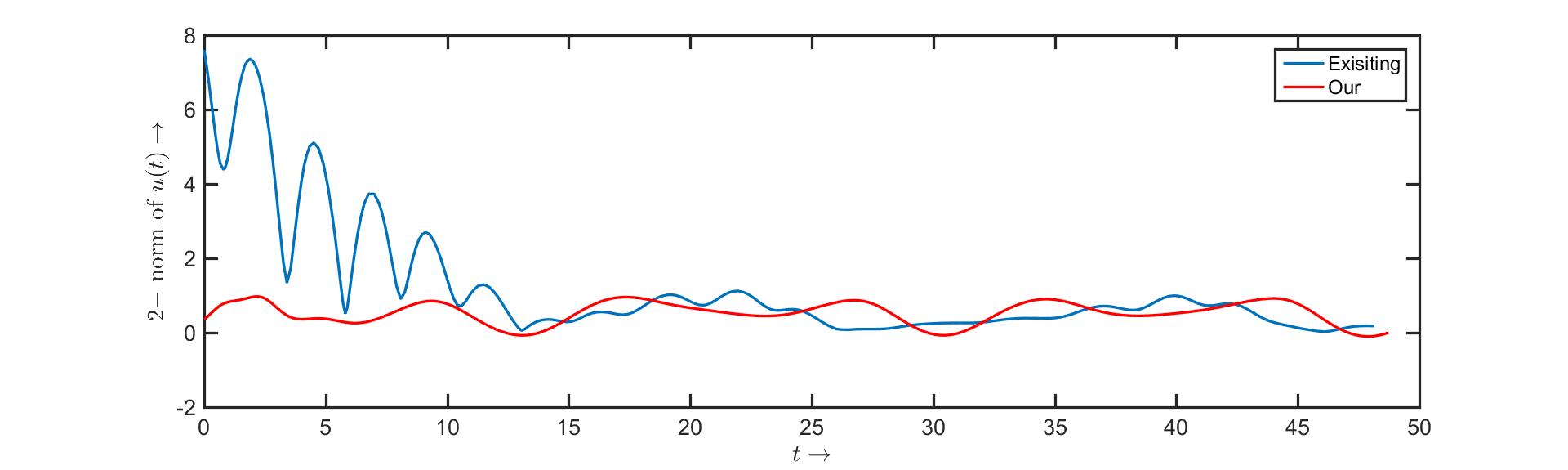}
  \caption{(i)}\label{fig9}
  \end{subfigure}
\begin{subfigure}[b]{0.18\textwidth}
  \centering
  \includegraphics[scale=0.06]{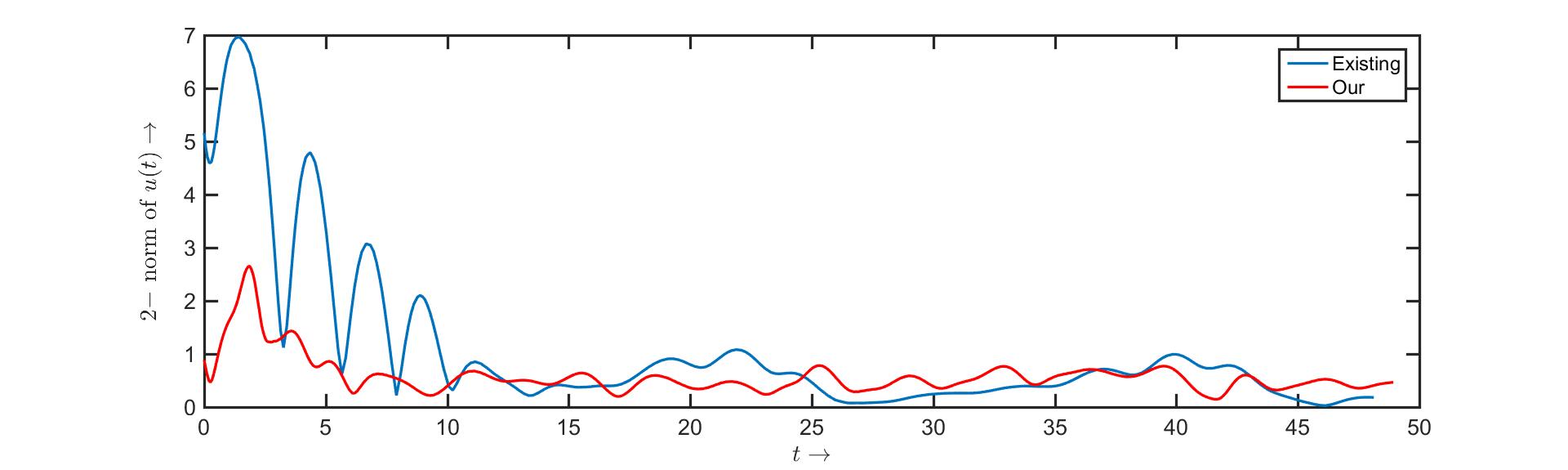}
  \caption{(ii)}\label{fig10}
\end{subfigure}
  \caption{Comparison of control effort for initial conditions (i) and (ii)}
\end{figure}
\textit{Remark 1:}
The control law in \cite{buloandmurray} for tracking the reference trajectory $t \to g_r(t) \in G$ by a fully actuated SMS given by $(G, \mathbb{I}, \mathbb{R}^n)$ is
\begin{align} \label{132}
&F(t,g,\xi) = - T_eL_{g^{-1}_{r} g}^* (\mathrm{d} \Psi (g^{-1}_{r} g)) +F_{diss} (\xi - Ad_{g^{-1} {g^{-1}_{r}}}v_r ) \\ \nonumber
&+ \mathbb{I}^\sharp (\stackrel{\mathfrak{g}}{\nabla}_{\xi} Ad_{g^{-1} {g_r}}v_r + [Ad_{g^{-1}g_r} v_r, \xi ] + Ad_{g^{-1}g_r} \dot{v}_{r} )
\end{align}
where $v_r(t)$ is the body velocity of the reference trajectory defined by $v_r(t) = T_{g_r} L_{g_r^{-1}} \dot{g}_{r}(t)$ and $\Psi : G \times G \to \mathbb{R}$ is a tracking error function. It is shown in \cite{dayawansa} that by choosing $\Psi = \psi \circ E$, where $E$ is defined as in \eqref{errmap} and $\psi$ is a navigation function, the control law in \eqref{132} achieves AGAT of $g_r (t)$. On comparing \eqref{eq:53} and \eqref{132} with $\Psi = \psi \circ E$, it is observed that in \eqref{eq:53} the acceleration of the error trajectory on the Lie algebra given by $\stackrel{\mathfrak{g}}{\nabla}_\eta \eta$ appears as an additional term.
\\
In order to observe the effect of this term on the controlled trajectory we compare the tracking results for an externally actuated rigid body obtained by the existing the control law with the proposed control law. The rigid body is an SMS on $SO(3)$ and $\psi(E)= tr(P(I_3-E))$, where $P$ is a symmetric positive definite matrix is chosen as the compatible navigation function which has a unique minimum at $I_3$. We consider a rigid body with an inertia matrix given by $\mathbb{I} =\begin{pmatrix} 4&1 &1\\ 1 5 & 0.2&  2\\ 1& 2 & 6.3\\ \end{pmatrix}$ and initial conditions $R(0) = \begin{pmatrix} 0.36&  0.48 &-0.8 \\ -0.8 &0.6& 0\\ 0.48 & 0.64 & 0.60\\ \end{pmatrix}$ and $\mathbb{I}\Omega = \begin{pmatrix} 1 &2.2 &5.1\\ \end{pmatrix}$. and the reference is generated by a dummy rigid body with inertia matrix $\mathbb{I}_d=\begin{pmatrix}
               1 & 0& 0\\ 0 &1.2& 0\\ 0& 0 &2 \\
               \end{pmatrix}$, initial conditions, $R_d(0)=I_3 $ and $\mathbb{I}_d{\Omega_d} = \begin{pmatrix}-0.8& -0.3 &-0.5\end{pmatrix}$.
In the Morse function, $P = diag(4 ,4.5 ,4.2)$ and in the intermediate control $F_{diss} = -diag( 3.5 ,3.5, 3.7)$. In figures ~\ref{fig7} and ~\ref{fig8}, the reference and two controlled trajectories obtained by the existing and proposed control law are plotted together.

In order to compare the control effort, we compute the $2$ norm of $\breve{u}(t) \in \mathfrak{so}(3)$ for (i) the rigid body with the above initial conditions and (ii) with initial conditions given by $R(0) = \begin{pmatrix} 0.7071&  0.7071 & 0 \\ -0.7071 & 0.7071 & 0\\ 0 & 0 & 1\\ \end{pmatrix}$ and $\mathbb{I}\Omega = \begin{pmatrix} 1 &2.2 &5.1\\ \end{pmatrix}$ and compare it with the $2$ norm of existing control in figures \ref{fig9} and ~\ref{fig10} respectively.

\section{Simulation Results}
\subsection{AGAT on $S^2$}
\begin{figure}[h!]
\begin{tikzpicture} 
\def\R{1} 
\def\angEl{35} 
\def\angAz{-105} 
\def\angPhi{-40} 
\def\angBeta{19} 


\pgfmathsetmacro\H{\R*cos(\angEl)} 
\tikzset{xyplane/.estyle={cm={cos(\angAz),sin(\angAz)*sin(\angEl),-sin(\angAz),
                              cos(\angAz)*sin(\angEl),(0,-\H)}}}
\LongitudePlane[xzplane]{\angEl}{\angAz}
\LongitudePlane[pzplane]{\angEl}{\angPhi}
\LatitudePlane[equator]{\angEl}{0}


\draw[xyplane] (-2*\R,-2*\R) rectangle (2.2*\R,2.8*\R);
\fill[ball color=white] (0,0) circle (\R); 
\draw (0,0) circle (\R);



\coordinate (O) at (0,0);
\coordinate [mark coordinate](N) at (0,\H);
\coordinate [mark coordinate](S) at (0,-\H);
\path[pzplane] (\R,0) coordinate (PE);
\path[xzplane] (\R,0) coordinate (XE);
\path (PE) ++(0,-\H) coordinate (Paux); 
                                        second line={(S)--(Paux)});


\DrawLatitudeCircle[\R]{0} 
\DrawLongitudeCircle[\R]{\angAz} 
\DrawLongitudeCircle[\R]{\angAz+90} 
\DrawLongitudeCircle[\R]{\angPhi} 


\draw[xyplane,<->] (1.8*\R,0) node[below] {$y$} -- (0,0) -- (0,2.4*\R)
    node[right] {$x$};
\draw[->] (0,-\H) -- (0,1.6*\R) node[above] {$z$};



\end{tikzpicture}
\caption{A 2-sphere in $\mathbb{R}^3$}
\end{figure}
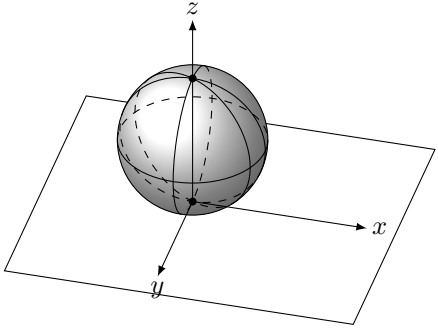
\subsubsection{Navigation function}
We consider the restriction of the height function in $\mathbb{R}^3$ to $S^2$ given by $\boldsymbol{\psi}(x,y,z)=z $ for $(x,y,z)\in \mathbb{R}^3$. It is a navigation function on $S^2$ with $(0,0,-1)^T$ as the unique minimum and $(0,0,1)^T$ as maximum. It can be verified that the $Hess \boldsymbol{\psi}$ is non-degenerate at both extremal points. The projection map $P_{\mathcal{D}_q}:\mathbb{R}^3 \to T_qS^2$ is defined as
\begin{equation}\label{125}
  P_{\mathcal{D}_q}v= - \{\hat{q}\}^2v
\end{equation}
\subsubsection{Configuration Error map}
The configuration error map $\textbf{E}: S^2 \times S^2 \to S^2$ is chosen as
\begin{equation}\label{126}
  \textbf{E}(q_1,q_2)= \begin{pmatrix}
                \sqrt{1- {\langle q_1,q_2\rangle_{\mathbb{R}^3}}^2} & 0 & {-\langle q_1,q_2\rangle_{\mathbb{R}^3}}^2
              \end{pmatrix}^T
\end{equation}
for $q_1$, $q_2 \in S^2$. As $E$ is symmetric, therefore, $\boldsymbol{\psi} \circ \textbf{E}$ is also symmetric. $\textbf{E}(q,q)= \begin{pmatrix}
                                                             0 & 0 & -1
                                                           \end{pmatrix}^T$ hence, $E(q,q)$ is the minimum of $\boldsymbol{\psi}$. Therefore $(\boldsymbol{\psi},\textbf{E})$ is a compatible pair according to Definition 1.\\
The $(1,1)$ tensors are
\begin{align*}
\mathrm{d}_1\textbf{E}(q_1,q_2)&= \begin{pmatrix}
                            \frac{\langle q_1,q_2\rangle_{\mathbb{R}^3}q_2^T}{ \sqrt{1- {\langle q_1,q_2\rangle_{\mathbb{R}^3}}}} & 0_{1 \times 3} & -q_2^T
                          \end{pmatrix}^T \quad \text{and,}\\
\mathrm{d}_2\textbf{E}(q_1,q_2)&= \begin{pmatrix}
                            \frac{\langle q_1,q_2\rangle_{\mathbb{R}^3}q_1^T}{ \sqrt{1- {\langle q_1,q_2\rangle_{\mathbb{R}^3}}^2}} & 0_{1 \times 3} & -q_1^T
                          \end{pmatrix}^T.
\end{align*}
The $(2,1)$ tensors are $3 \times 3$ arrays given by
\begin{align*}
  \mathrm{d}_1\mathrm{d}_1\textbf{E}(q_1,q_2)&= \begin{pmatrix}
                            \frac{q_2q_2^T}{ {\sqrt{1- {\langle q_1,q_2\rangle_{\mathbb{R}^3}}^2}^3}} & 0_{3\times 3} & 0_{3 \times 3}
                          \end{pmatrix}^T,\\
  \mathrm{d}_2\mathrm{d}_1\textbf{E}(q_1,q_2)&= \begin{pmatrix}
                            \frac{q_1q_2^T -{\langle q_1,q_2\rangle}_{\mathbb{R}^3}^3 I_3 + {\langle q_1,q_2\rangle}_{\mathbb{R}^3} I_3}{ {\sqrt{1- {\langle q_1,q_2\rangle_{\mathbb{R}^3}}^2}^3}} & 0_{3\times 3} & -I_3
                          \end{pmatrix}^T,\\
  \mathrm{d}_2\mathrm{d}_2\textbf{E}(q_1,q_2)&= \begin{pmatrix}
                            \frac{q_1q_1^T}{ {\sqrt{1- {\langle q_1,q_2\rangle_{\mathbb{R}^3}}^2}^3}} & 0_{3\times 3} & 0_{3 \times 3}
                          \end{pmatrix}^T \quad \text{and,}\\
\mathrm{d}_1\mathrm{d}_2\textbf{E}(q_1,q_2)&= \begin{pmatrix}
                            \frac{q_2q_1^T -{\langle q_1,q_2\rangle}_{\mathbb{R}^3}^3 I_3 + {\langle q_1,q_2\rangle}_{\mathbb{R}^3} I_3}{ {\sqrt{1- {\langle q_1,q_2\rangle_{\mathbb{R}^3}}^2}^3}} & 0_{3\times 3} & -I_3
                          \end{pmatrix}^T.
\end{align*}
\subsubsection{AGAT results}
\begin{figure}[h!]
  \centering
  \begin{subfigure}[b]{0.15\textwidth}
  \includegraphics[scale=0.1]  {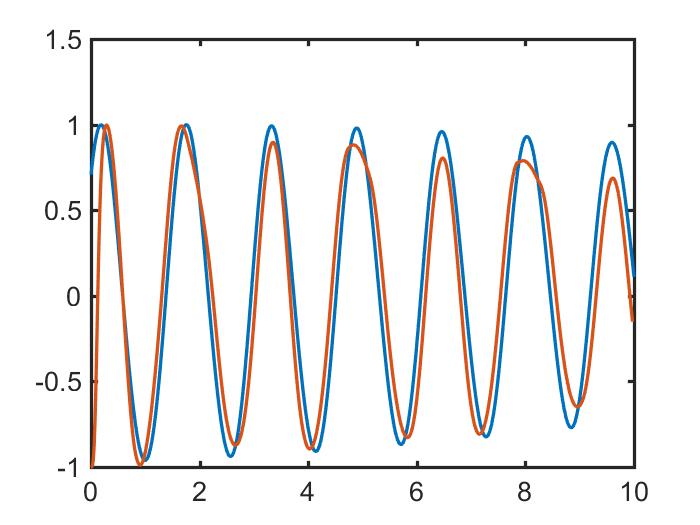}
  \caption{x coordinate}\label{fig1}
  \end{subfigure}
\begin{subfigure}[b]{0.15\textwidth}
  \centering
  \includegraphics[scale=0.1]{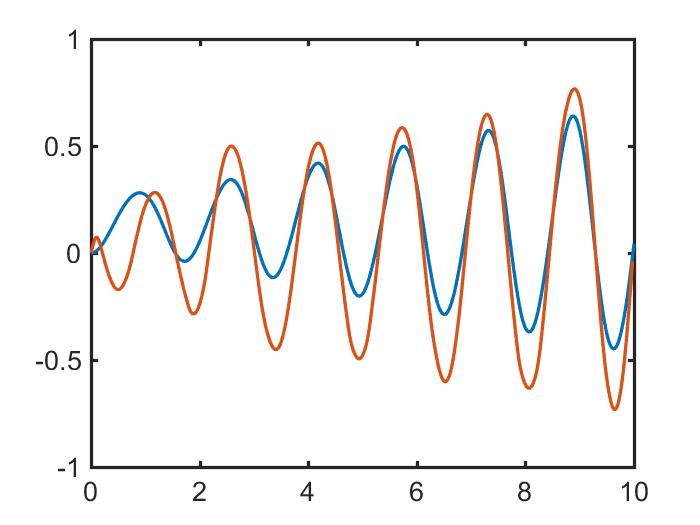}
  \caption{y coordinate}\label{fig2}
\end{subfigure}
%
\begin{subfigure}[b]{0.15\textwidth}
  \centering
  \includegraphics[scale=0.1]{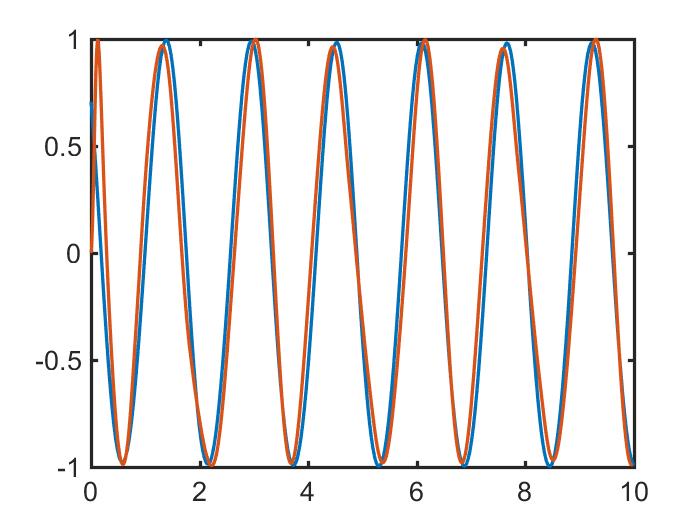}
  \caption{z coordinate}\label{fig3}
  \end{subfigure}
  \caption{Tracking results for first set of initial conditions}
\end{figure}
\begin{figure}[h!]
  \centering
  \begin{subfigure}[b]{0.15\textwidth}
  \includegraphics[scale=0.1]  {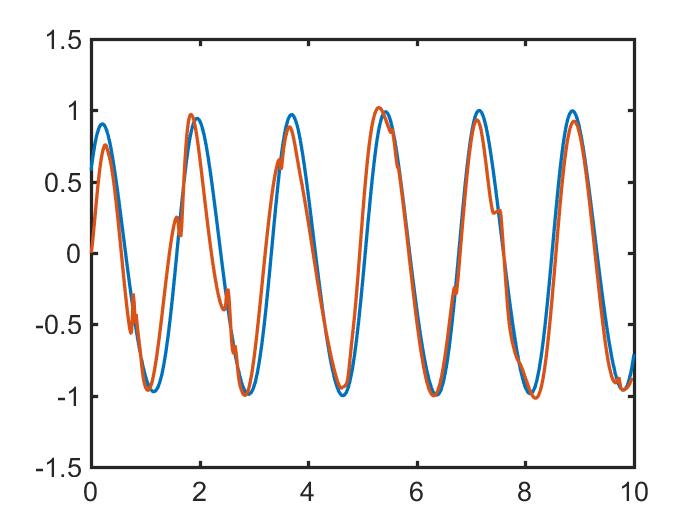}
  \caption{x coordinate}\label{fig4}
  \end{subfigure}
\begin{subfigure}[b]{0.15\textwidth}
  \centering
  \includegraphics[scale=0.1]{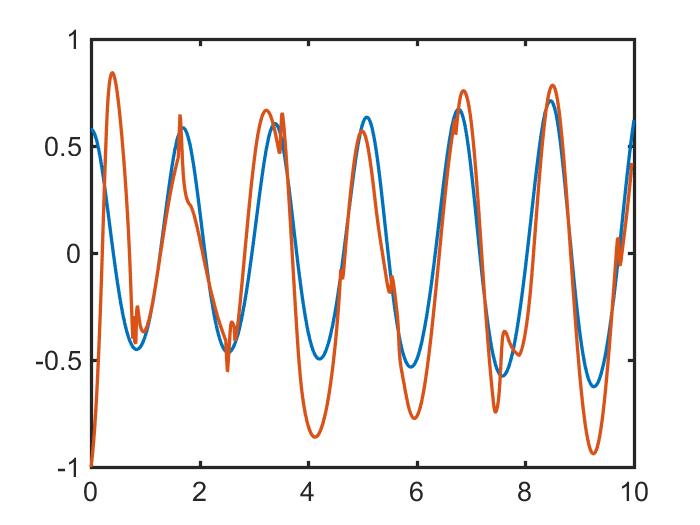}
  \caption{y coordinate}\label{fig5}
\end{subfigure}
\begin{subfigure}[b]{0.15\textwidth}
  \centering
  \includegraphics[scale=0.1]{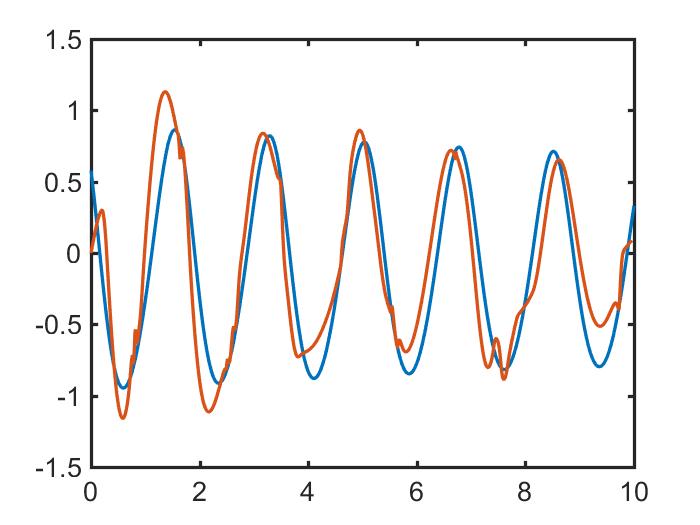}
  \caption{z coordinate}\label{fig6}
  \end{subfigure}
  \caption{Tracking results for second set of initial conditions}
\end{figure}
\begin{figure}[h!]
  \centering
  \begin{subfigure}[b]{0.15\textwidth}
  \includegraphics[scale=0.1]  {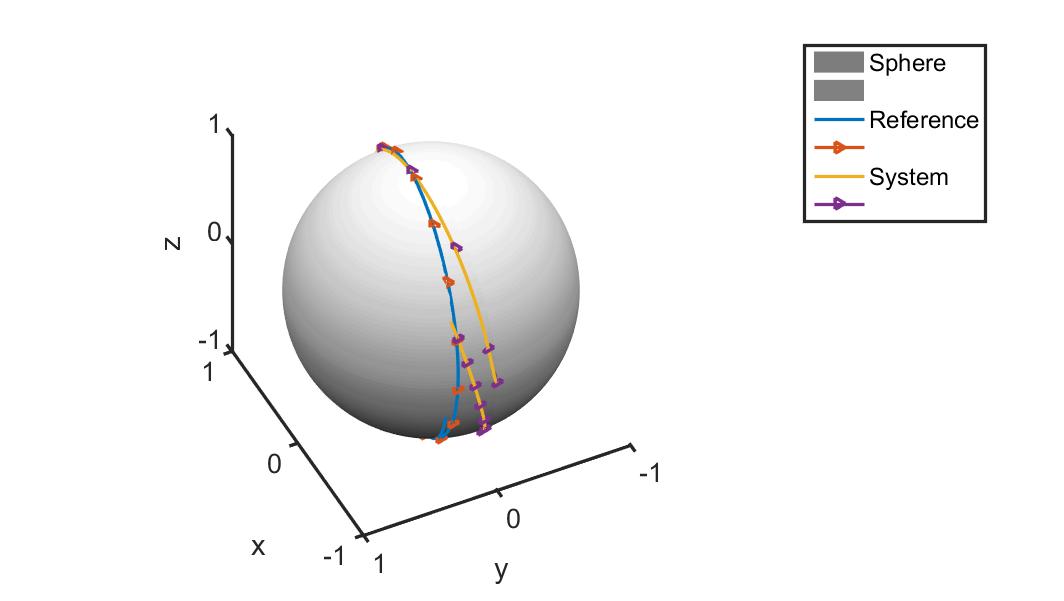}
  \end{subfigure}
\begin{subfigure}[b]{0.15\textwidth}
  \centering
  \includegraphics[scale=0.1]{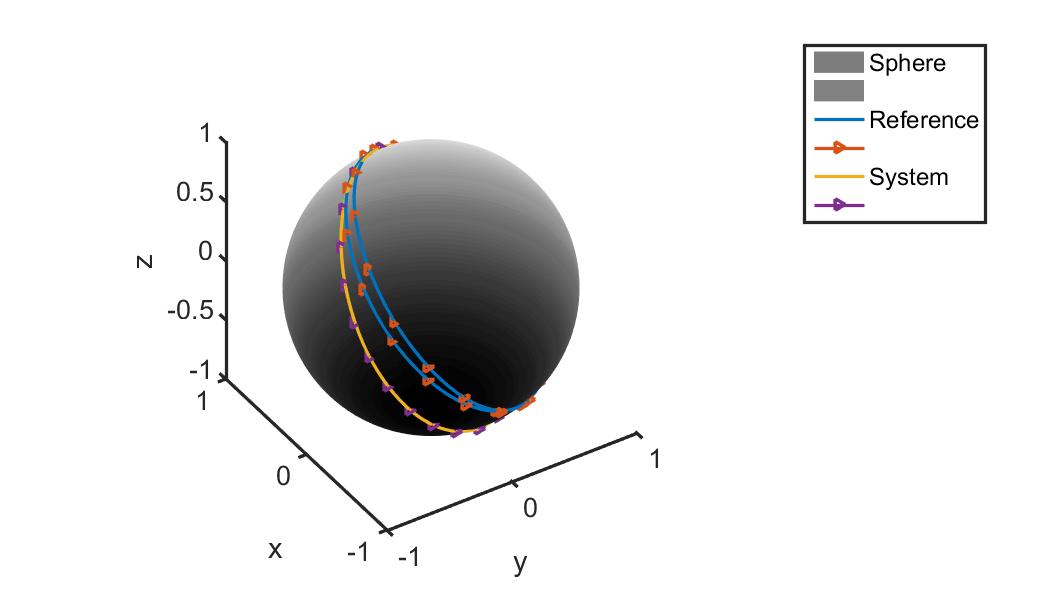}
\end{subfigure}
\begin{subfigure}[b]{0.15\textwidth}
  \centering
  \includegraphics[scale=0.06]{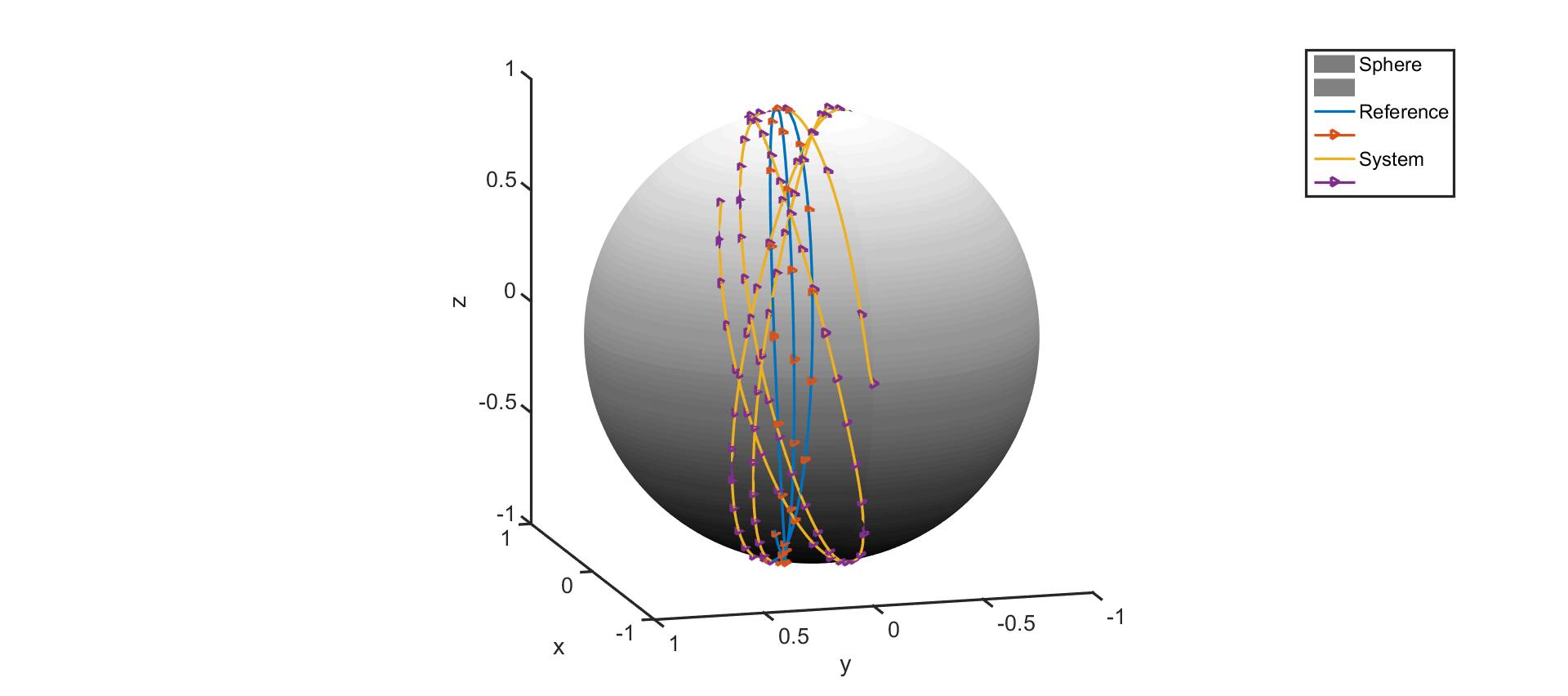}
\end{subfigure}
  \caption{3D visualization of tracking problem for first set of initial conditions}
\end{figure}
\begin{figure}[!ht]
  \centering
  \begin{subfigure}[b]{0.15\textwidth}
  \includegraphics[scale=0.1]  {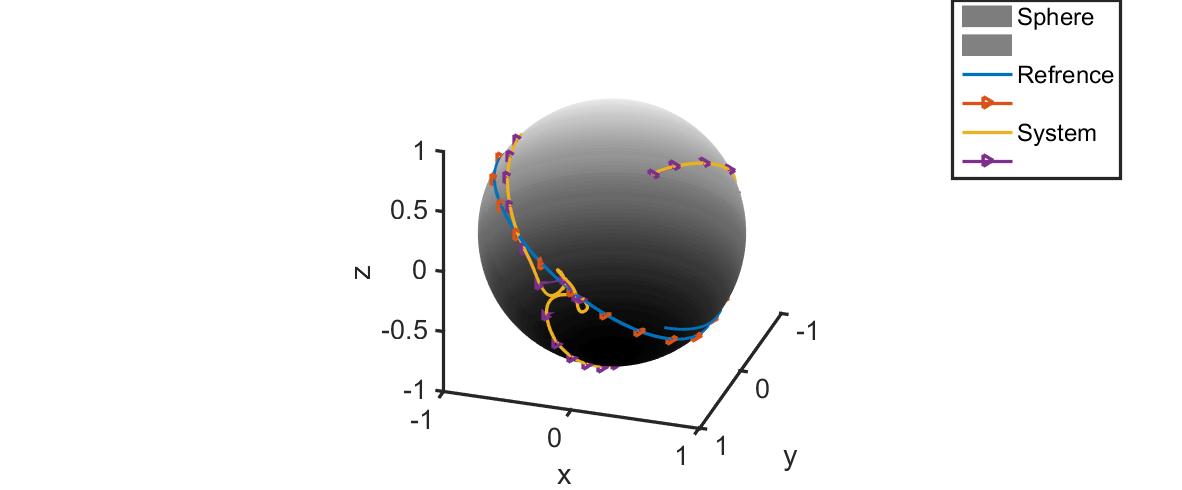}
  \end{subfigure}
\begin{subfigure}[b]{0.15\textwidth}
  \centering
  \includegraphics[scale=0.1]{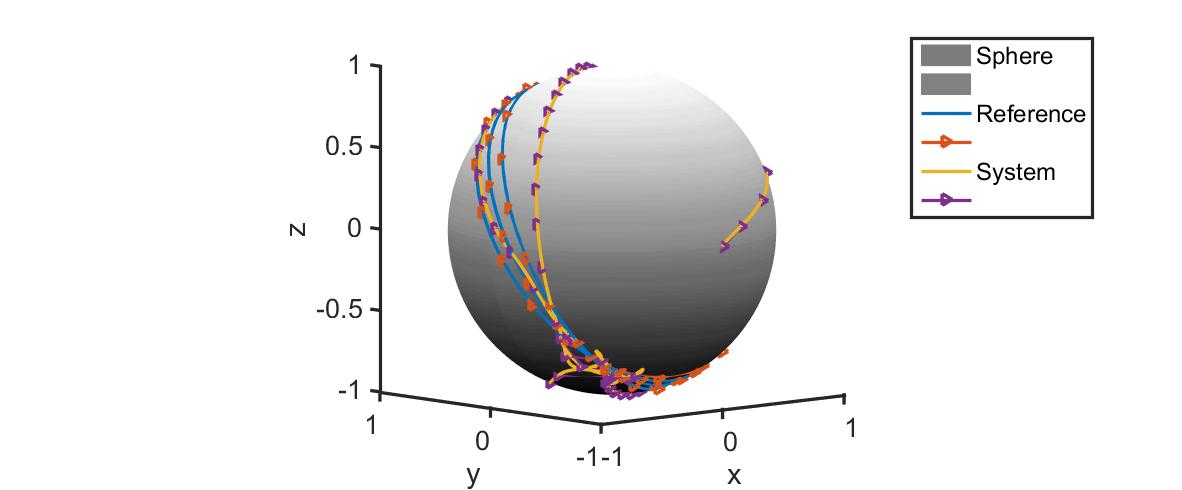}
\end{subfigure}
\begin{subfigure}[b]{0.15\textwidth}
  \centering
  \includegraphics[scale=0.1]{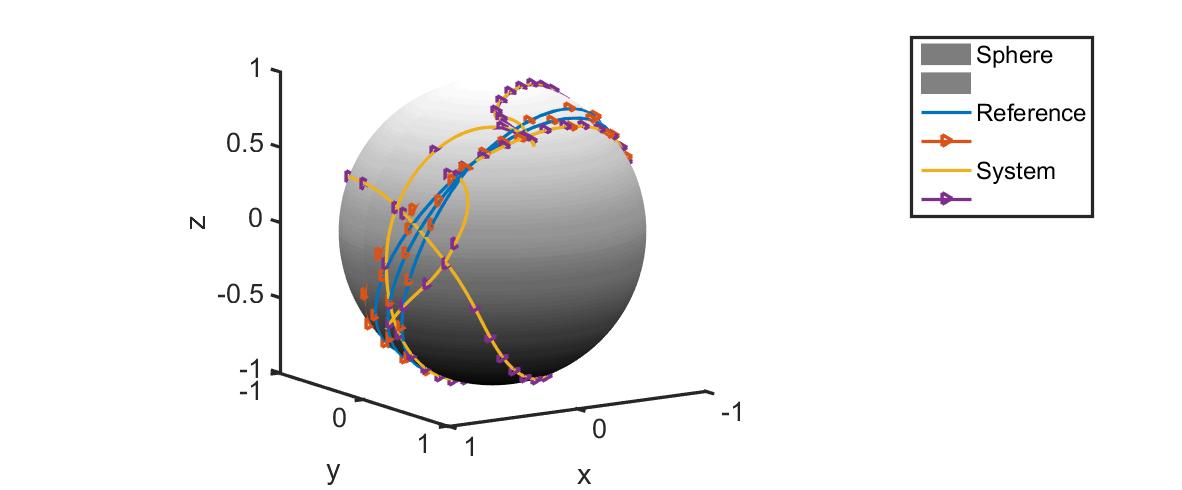}
\end{subfigure}
  \caption{3D visualization of tracking problem for second set of initial conditions}
\end{figure}
The constrained affine connection on $S^2$ is given by
\[ \stackrel{\mathcal{D}}{\nabla} _{\dot{\boldsymbol{\gamma}}} \dot{\boldsymbol{\gamma}} =  \ddot{\boldsymbol{\gamma}}(t) + || \dot{\boldsymbol{\gamma}}||_2^2\boldsymbol{\gamma}.
\]
Therefore, from \eqref{113}, the system trajectory $\boldsymbol{\gamma}(t)$ for any spherical pendulum satisfies the following equation
\begin{equation}\label{130}
  \ddot{\boldsymbol{\gamma}}(t) + || \dot{\boldsymbol{\gamma}}||_2^2\boldsymbol{\gamma} = P_{\mathcal{D}_{\boldsymbol{\gamma}}}(u).
\end{equation}
The reference trajectory is generated by a dummy spherical pendulum with the following initial conditions $(\boldsymbol{\gamma}_{ref}(0),\dot{\boldsymbol{\gamma}}_{ref}(0))=\begin{pmatrix}
                                          \frac{1}{\sqrt{2}} & 0 & \frac{1}{\sqrt{2}} & 3 & 0 & -3
                                        \end{pmatrix}^T$ and $u = \begin{pmatrix}
                                          1 & 2 & -1
                                        \end{pmatrix}^T$.
The initial conditions for the system trajectory is given by $
  (\boldsymbol{\gamma}(0),\dot{\boldsymbol{\gamma}}(0))=\begin{pmatrix}
                                          -1 & 0 & 0& 0 & 1 & 0
                                        \end{pmatrix}^T
$. Theorem 1 is applied to compute the tracking control given in \eqref{130} with $F_{diss}=-4$, $K_p = 3.7$. The system trajectory is generated using ODE45 solver of MATLAB. The reference (in blue) and system trajectory (in red) are compared in all $3$ coordinates in figures ~\ref{fig1}, ~\ref{fig2} and ~\ref{fig3}. We consider another set of initial conditions as follows.
$(\boldsymbol{\gamma}_{ref}(0),\dot{\boldsymbol{\gamma}}_{ref}(0))=\begin{pmatrix}
                                          \frac{1}{\sqrt(3)} & \frac{1}{\sqrt(3)} & \frac{1}{\sqrt(3)}& 3 & 0 & -3
                                        \end{pmatrix}^T$ and $u = \begin{pmatrix}
                                          1 & 2 & 1
                                        \end{pmatrix}^T$ for the dummy spherical pendulum. The initial conditions for the system trajectory are $
  (\boldsymbol{\gamma}(0),\dot{\boldsymbol{\gamma}}(0))=\begin{pmatrix}
                                          0 & -1 & 0& 1 & 2 & 2
                                        \end{pmatrix}^T.$ Theorem 1 is applied to compute the tracking control with $F_{diss} =-5.7$ and $K_p= 4$. The reference (in blue) and system trajectory (in red) are compared in all $3$ coordinates in figures ~\ref{fig4}, ~\ref{fig5} and ~\ref{fig6}.

\subsection{AGAT on Lissajous curve}
\begin{figure}
  \centering
  \includegraphics[scale=0.3]{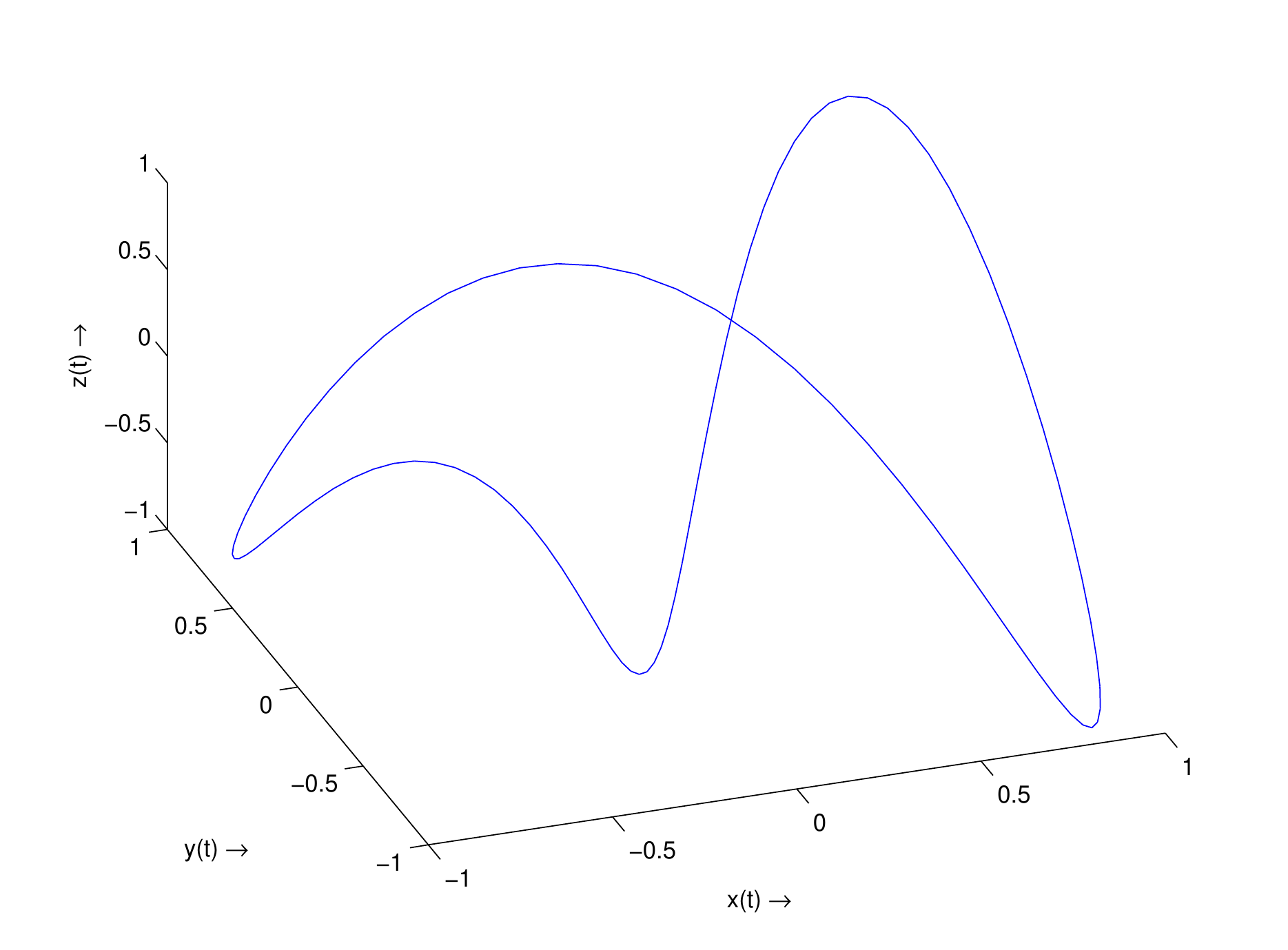}
  \caption{The space curve L}\label{lissfig}
\end{figure}
A Lissajous curve in $3$ dimensions (shown in figure ~\ref{lissfig}) is a $1-$ dimensional smooth, connected, compact manifold in $\mathbb{R}^3$. It is denoted by $L$ and defined as $L = h^{-1}\{\begin{pmatrix}
                                                             0 & 0
                                                           \end{pmatrix}^T\}$
                                                           where $h : \mathbb{R}^3 \to \mathbb{R}^2$ is given by $h(x,y,z) = \begin{pmatrix}
              x^2+y^2+z^2-2xyz -1 & 4x^2y -2xz-y
            \end{pmatrix}^T$.Therefore $T_\textbf{x} L= \{v \in \mathbb{R}^3: \mathrm{D}h(\textbf{x}) v=0\}$ for $\textbf{x} \in L$.

\subsection{Navigation function:} We consider $\boldsymbol{\psi}: L \to \mathbb{R}$ given as $\boldsymbol{\psi}(x,y,z) = x$. It is observed that $\boldsymbol{\psi}$ has a unique minimum at $\begin{pmatrix}
-1 & 0 & 0
\end{pmatrix}^T$ and a unique maximum at $\begin{pmatrix}
1 & 0 & 0
\end{pmatrix}^T$. Using parameterizations $\psi_1(t)= \begin{pmatrix}
                             \cos(t) & \sin(2t) & \sin(3t)
                           \end{pmatrix}^T, t \in (-\pi,\pi)$ around $\begin{pmatrix} -1 & 0& 0 \end{pmatrix}^T$ and the parameterization $\psi_2(t)= \begin{pmatrix}
                             \cos(t) & \sin(2t) & -\sin(3t)
                           \end{pmatrix}^T, t \in (-\pi,\pi)$ around $\begin{pmatrix} 1 &0 &0 \end{pmatrix}^T$, it is verified that $Hess ({\psi_i(t)})|_{t=0} \neq 0$, $i=1,2$. Therefore $\psi$ is a navigation function.
\subsubsection{Configuration error map} The configuration error map $\textbf{E}:L \times L \to L$ is chosen as
\begin{equation}\label{confige}
  \textbf{E}(q_1,q_2)= \begin{pmatrix}
                -\frac{\langle q_1,q_2\rangle}{|q_1||q_2|}_{\mathbb{R}^3} \\
-2\frac{|q_1 \times q_2|{\langle q_1,q_2\rangle}_{\mathbb{R}^3}}{|q_1|^2|q_2|^2}  \\  4\frac{|q_1 \times q_2|{{\langle q_1,q_2\rangle}_{\mathbb{R}^3}}^2}{|q_1|^3|q_2|^3 }-\frac{|q_1 \times q_2|}{|q_1||q_2| }
              \end{pmatrix}.
\end{equation}
It is observed that $\textbf{E}(q_1,q_2) \in L$ and that $(\boldsymbol{\psi}\circ \textbf{E})(q_1,q_2)$ is symmetric. As $\textbf{E}(q,q)= \begin{pmatrix}
                                    -1 & 0 & 0
                                  \end{pmatrix}$ hence, $\textbf{E}(q,q)$ is the minimum of $\boldsymbol{\psi}$. Therefore, $(\boldsymbol{\psi}, \textbf{E})$ is a compatible pair according to Definition 1. We define $\beta = \frac{|q_1 \times q_2|}{|q_1||q_2|}$ and $\textbf{E} = \begin{pmatrix}
 E_1& E_2&E_3
                                  \end{pmatrix}^T$. Observe that $\beta = \sqrt{1- E_1^2}$. Therefore, $\mathrm{d}_1\textbf{E}(q_1,q_2) =\begin{pmatrix} \frac{\partial E_1 }{\partial q_1} & \frac{\partial E_2 }{\partial q_1}& \frac{\partial E_3 }{\partial q_1} \end{pmatrix}^T$ and, $\frac{\partial \beta }{\partial q_1} = -\frac{E_1}{\beta}\frac{\partial E_1}{ \partial q_1}$, $\frac{\partial \beta }{\partial q_2} = -\frac{E_1}{\beta}\frac{\partial E_1}{ \partial q_2}$.The $(1,1)$ tensors are $3 \times 3$ matrices given as
\begin{align}\label{onet}
\mathrm{d}_1\textbf{E}(q_1,q_2)&= \begin{pmatrix}
                            -E_1\frac{q_1^T}{2|q_1|^2}- \frac{q_2^T}{|q_1||q_2|} \\
   2E_1 \frac{\partial \beta }{\partial q_1}+2\beta \frac{\partial E_1 }{\partial q_1} \\
   (4E_1^2 - 1)\frac{\partial \beta }{\partial q_1} + 8 \beta E_1 \frac{\partial E_1 }{\partial q_1} \end{pmatrix} \quad \text{and},\\ \nonumber
\mathrm{d}_2\textbf{E}(q_1,q_2)&=\begin{pmatrix}
 -E_1\frac{q_2^T}{2|q_2|^2}- \frac{q_1^T}{|q_1||q_2|} \\
   2E_1 \frac{\partial \beta }{\partial q_2}+2\beta \frac{\partial E_1 }{\partial q_2} \\
   (4E_1^2 - 1)\frac{\partial \beta }{\partial q_2} + 8 \beta E_1 \frac{\partial E_1 }{\partial q_2}
   \end{pmatrix}.
\end{align}
The $(2,1)$ tensors are $3 \times 3 \times 3$ arrays given as
\begin{align}\label{twot}
&\mathrm{d}_1(\mathrm{d}_1\textbf{E})(q_1,q_2)=\\ \nonumber
&\begin{pmatrix}
                     \frac{q_1 q_2^T}{2|q_1|^3|q_2|}- I_3\frac{E_1}{2|q_1|^2} + E_1\frac{q_1 q_1^T}{2|q_1|^4} -\{\frac{\partial E_1 }{\partial q_1}\}^T\frac{q_1^T}{2|q_1|^2} \\ \nonumber
4 sym\big (\{\frac{\partial E_1}{\partial q_1}\}^T \frac{\partial \beta }{\partial q_1} \big)+2E_1 \frac{\partial^2 \beta }{\partial q_1^2}+2\beta \frac{\partial^2 E_1 }{\partial q_1^2}\\  \nonumber
  16E_1 sym\big ({\frac{\partial E_1}{\partial q_1}}^T \frac{\partial \beta }{\partial q_1} \big)+\alpha \frac{\partial^2 \beta }{\partial q_1^2}+8\beta E_1 \frac{\partial^2 E_1 }{\partial q_1^2}+ 8 \beta {\frac{\partial E_1}{\partial q_1}}^T \frac{\partial E_1}{\partial q_1}
\end{pmatrix}\\ \nonumber
&\mathrm{d}_2(\mathrm{d}_1\textbf{E})(q_1,q_2)=\\ \nonumber
&\begin{pmatrix}
-\frac{I_3}{|q_1||q_2|} + \frac{q_2 q_2^T}{2 |q_2|^3|q_1|} - \frac{q_1^T}{2|q_1|^2}\frac{\partial E_1 }{\partial q_2}\\
2{\frac{\partial E_1}{\partial q_2}}^T \frac{\partial \beta }{\partial q_1}
+2{\frac{\partial \beta}{\partial q_2}}^T \frac{\partial E_1 }{\partial q_1}
+2E_1 \frac{\partial^2 \beta }{\partial q_2 \partial q_1}+2\beta \frac{\partial^2 E_1 }{\partial q_2 \partial q_1}\\
8E_1\big({\frac{\partial E_1}{\partial q_2}}^T \frac{\partial \beta }{\partial q_1}
+{\frac{\partial \beta}{\partial q_2}}^T \frac{\partial E_1 }{\partial q_1}\big)
+ \alpha \frac{\partial^2 \beta }{\partial q_2\partial q_1}+8\beta\big( E_1\frac{\partial^2 E_1 }{\partial q_2\partial q_1}+  {\frac{\partial E_1}{\partial q_2}}^T \frac{\partial E_1}{\partial q_1}\big)
\end{pmatrix} \nonumber
\end{align}
The tensors $\mathrm{d}_2(\mathrm{d}_2\textbf{E})$ and $\mathrm{d}_1(\mathrm{d}_2\textbf{E})$ are obtained similarly as $\textbf{E}(q_1,q_2)$ is symmetric in $q_1$ and $q_2$.

\subsubsection{AGAT results}
We consider a particle moving on the curve $L$. The equations of motion of the particle are given by the geodesic $\boldsymbol{\gamma}(t)$ on $L$ for $t \in \mathbb{R}^+$. Therefore,
\begin{equation}\label{geod1}
\ddot{\boldsymbol{\gamma}}(t)= \lambda_1 \mathrm{d}h(1) + \lambda_2 \mathrm{d}h(2)
\end{equation}
as $\ddot{\boldsymbol{\gamma}}(t) \in \mathcal{D}^\perp_\gamma$ for all $t$. Since $\gamma(t) \in L$ implies $h \circ \boldsymbol{\gamma}(t)=0$, therefore,
\begin{equation}\label{geod2}
\dot{\boldsymbol{\gamma}}^T \mathrm{D}^2h \dot{\boldsymbol{\gamma}} + \mathrm{D}h \ddot{\boldsymbol{\gamma}}=0.
\end{equation}
From \eqref{geod1} and \eqref{geod2} we obtain $\lambda_1$, $\lambda_2$ and hence the geodesic curve $\boldsymbol{\gamma}$. Therefore the affine connection on $L$ is given as
\begin{equation}\label{geodL}
\stackrel{\mathcal{D}}{\nabla}_{\dot{\boldsymbol{\gamma}}}\dot{\boldsymbol{\gamma}} =\ddot{\boldsymbol{\gamma}}(t)- \lambda_1 \mathrm{d}h(1) - \lambda_2  \mathrm{d}h(2)
\end{equation}
We consider the reference trajectory $\boldsymbol{\gamma_{ref}}(t)=\begin{pmatrix} \cos(\sin(t))& \sin(2\sin(t)) &\sin(3\sin(t)) \end{pmatrix}^T, t \geq 0$ with $(\boldsymbol{\gamma_{ref}}(0),\dot{\boldsymbol{\gamma}}_{ref}(0))= \begin{pmatrix}
1& 0 &0 & 0 & 2 & 3
\end{pmatrix}^T$. The initial conditions for system trajectory are $ (\boldsymbol{\gamma}(0),\dot{\boldsymbol{\gamma}}(0))= \begin{pmatrix}
-0.82& 0.9386 & 0.9672 & -1.197 & 1.1346 & 2.0798
\end{pmatrix}^T$.  Theorem 1 is applied to compute the tracking control in \eqref{thm11} with $F_{diss}=-1.2$, $K_p = 5.4$. The system trajectory is generated using ODE45 solver of MATLAB. The reference (in blue) and system trajectory (in red) are compared in all $3$ coordinates in figures ~\ref{liss11}, ~\ref{liss12}, ~\ref{liss13}. Another simulation is performed with the reference trajectory $\boldsymbol{\gamma_{ref}}(t)=\begin{pmatrix} \cos(t)& \sin(2t) &\sin(3t) \end{pmatrix}^T, t \geq 0$ with $(\boldsymbol{\gamma_{ref}}(0),\dot{\boldsymbol{\gamma}}_{ref}(0))= \begin{pmatrix}
1& 0 &0 & 0 & 2 & 3
\end{pmatrix}^T$. The initial conditions for system trajectory are $ (\boldsymbol{\gamma}(0),\dot{\boldsymbol{\gamma}}(0))= \begin{pmatrix}
0.54 & 0.9093 & 0.1411 & -0.8415 & -0.8323 & -2.97
\end{pmatrix}^T$. Theorem 1 is applied to compute the tracking control in \eqref{thm11} with $F_{diss}=-1.6$, $K_p = 5.3$.The reference (in blue) and system trajectory (in red) are compared in all $3$ coordinates in figures ~\ref{liss21}, ~\ref{liss22}, ~\ref{liss23}.
\begin{figure}[h!]
  \centering
  \begin{subfigure}[b]{0.15\textwidth}
  \includegraphics[scale=0.1]  {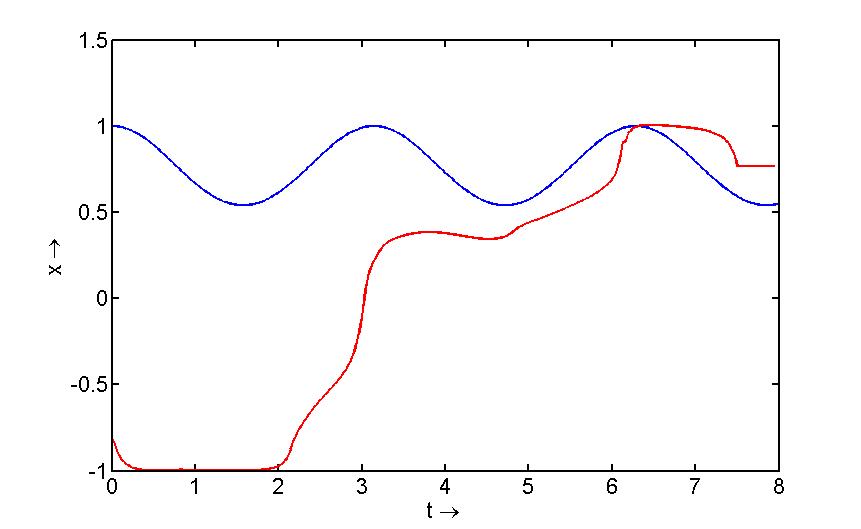}
  \caption{x coordinate}\label{liss11}
  \end{subfigure}
\begin{subfigure}[b]{0.15\textwidth}
  \centering
  \includegraphics[scale=0.1]{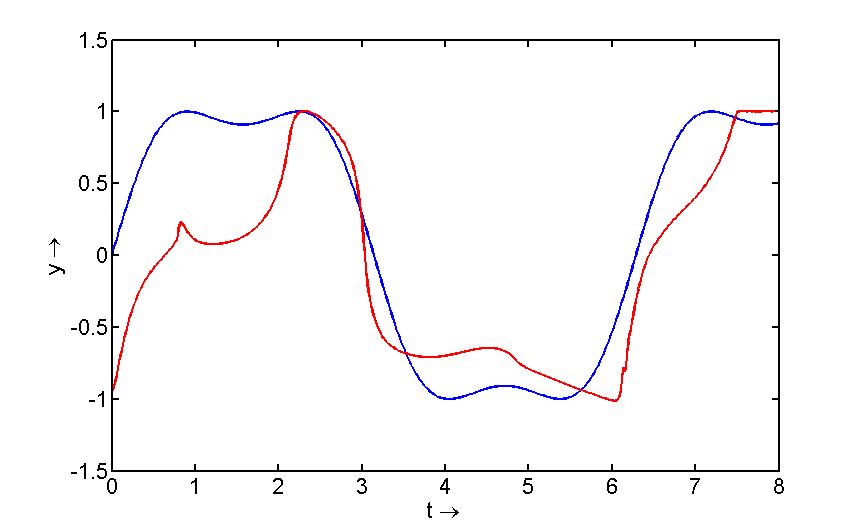}
  \caption{y coordinate}\label{liss12}
\end{subfigure}
\begin{subfigure}[b]{0.15\textwidth}
  \centering
  \includegraphics[scale=0.1]{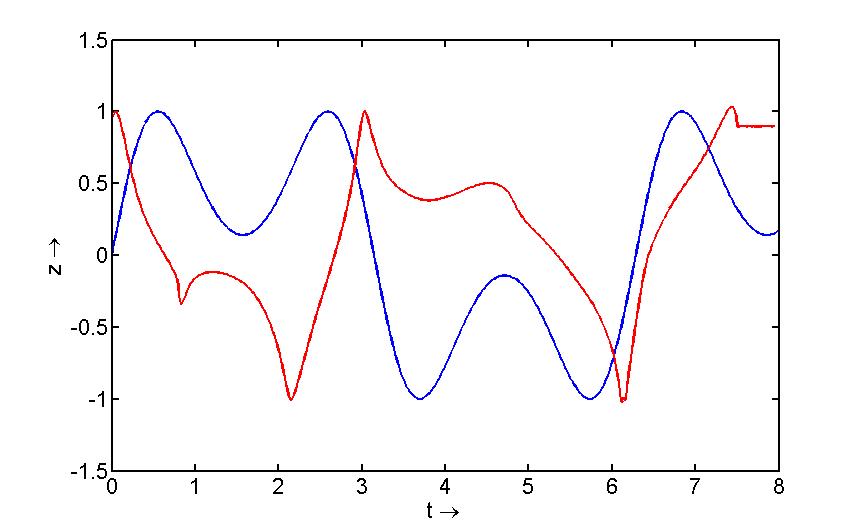}
  \caption{z coordinate}\label{liss13}
  \end{subfigure}
  \caption{Tracking results for first set of initial conditions}\label{lisseg1}
\end{figure}

\begin{figure}[h!]
  \centering
  \begin{subfigure}[b]{0.15\textwidth}
  \includegraphics[scale=0.1]  {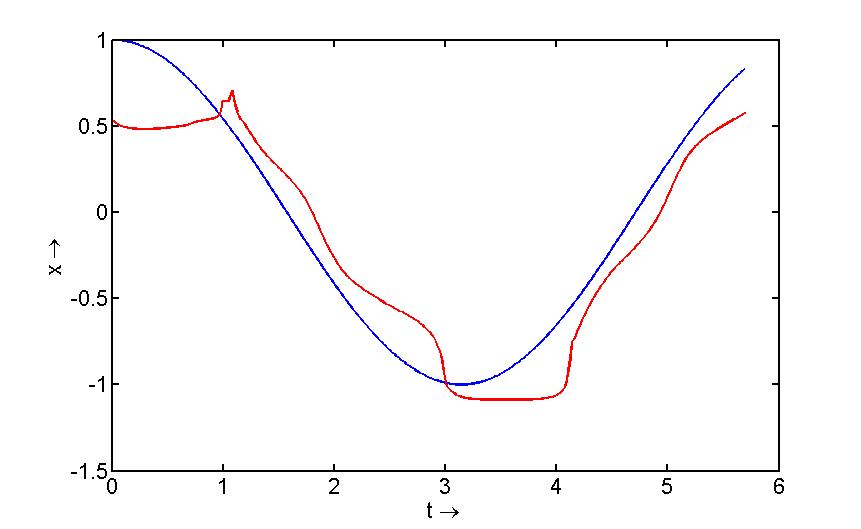}
  \caption{x coordinate}\label{liss21}
  \end{subfigure}
\begin{subfigure}[b]{0.15\textwidth}
  \centering
  \includegraphics[scale=0.1]{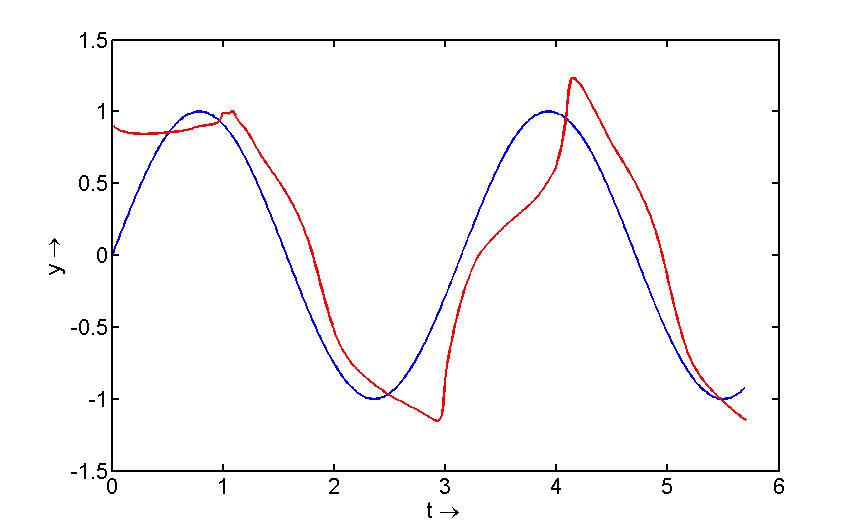}
  \caption{y coordinate}\label{liss22}
\end{subfigure}
\begin{subfigure}[b]{0.15\textwidth}
  \centering
  \includegraphics[scale=0.1]{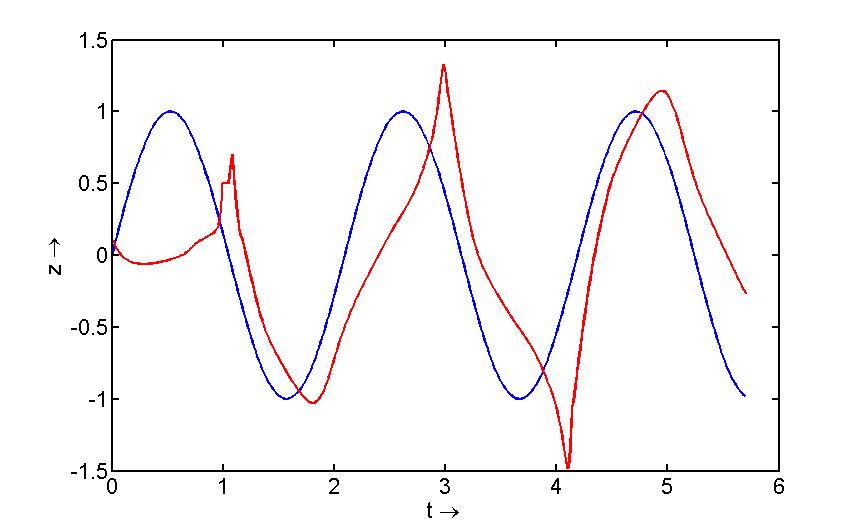}
  \caption{z coordinate}\label{liss23}
  \end{subfigure}
  \caption{Tracking results for second set of initial conditions}\label{lisseg2}
\end{figure}

\bibliographystyle{plain}
\bibliography{aps1}

\end{document}